\newcommand{\crn}{\ensuremath{\operatorname{cr}}}
\begin{document}
\begin{frontmatter}

\title{Compact Drawings of 1-Planar Graphs \\ with Right-Angle
  Crossings and Few Bends\tnoteref{t1}}

\tnotetext[t1]{An earlier version of this work appeared under the
same title in Proc.\ 26th Int.\ Sympos.\ Graph Drawing \& Network
Vis.\ (GD'18), pages 137--151, volume 11282 of Lect. Notes
Comp. Sci., Springer-Verlag~\cite{clwz-cd1pg-GD18}.}

\author{Steven Chaplick\fnref{orcidsc}}
\author{Fabian Lipp\fnref{orcidfl}}
\author{Alexander Wolff\fnref{orcidaw}}
\author{Johannes Zink\corref{cor1}\fnref{orcidjz}}

\address{Lehrstuhl f\"ur Informatik I, Universit\"at
      W\"urzburg, Germany \\
      \url{http://www1.informatik.uni-wuerzburg.de/en/staff} \\
      Manuscript submitted June 21, 2018;
      accepted February 11, 2019}

    \cortext[cor1]{Corresponding author; email:
      \href{mailto:zink@informatik.uni-wuerzburg.de}{\tt 
        zink@informatik.uni-wuerzburg.de}}
    \fntext[orcidsc]{ORCID:
      \href{https://orcid.org/0000-0003-3501-4608}{0000-0003-3501-4608}}
    \fntext[orcidfl]{ORCID:
      \href{https://orcid.org/0000-0001-7833-0454}{0000-0001-7833-0454}}
    \fntext[orcidaw]{ORCID:
      \href{https://orcid.org/0000-0001-5872-718X}{0000-0001-5872-718X}}
    	\fntext[orcidjz]{ORCID:
  	  \href{https://orcid.org/0000-0002-7398-718X}{0000-0002-7398-718X}}

\begin{keyword}
      graph drawing \sep beyond-planar graphs \sep 1-planar graphs
      \sep right-angle crossings \sep grid drawing
\end{keyword}

\newtheorem{theorem}{Theorem}
\newtheorem{lemma}[theorem]{Lemma}

\setcounter{tocdepth}{3}

\begin{abstract}
We study the following classes of beyond-planar graphs: 1-planar,
IC-planar, and NIC-planar graphs.  These are the graphs that admit a
1-planar, IC-planar, and NIC-planar drawing, respectively.
A drawing of a graph is \emph{1-planar} if every edge is crossed at most once.
A 1-planar drawing is \emph{IC-planar} if no two pairs of crossing edges share a vertex.
A 1-planar drawing is \emph{NIC-planar} if no two pairs of crossing edges share two vertices.

We study the relations of these beyond-planar graph classes (\emph{beyond-planar graphs} is a collective term for the primary attempts to generalize the planar graphs) to
\emph{right-angle crossing} (\emph{RAC}) graphs that
admit compact drawings on the grid with few bends.
We present four drawing algorithms that preserve the given embeddings.
First, we show that every $n$-vertex NIC-planar graph admits a NIC-planar RAC drawing with at most one bend per edge on a grid of size $\mathcal{O}(n) \times \mathcal{O}(n)$.
Then, we show that every $n$-vertex 1-planar graph admits a 1-planar RAC drawing with at most two bends per edge on a grid of size $\mathcal{O}(n^3) \times \mathcal{O}(n^3)$.
Finally, we make two known algorithms embedding-preserving; for
drawing 1-planar RAC graphs with at most one bend per edge and for
drawing IC-planar RAC graphs straight-line.
\end{abstract}

\end{frontmatter}

\graphicspath{{figures/}}

\section{Introduction}
\label{sec:introduction}

We frequently encounter \emph{graphs} and networks in diverse domains.
Taken by itself, a graph is just an abstract set of vertices with an abstract set of edges, which model some pairwise relations between vertices.
When it comes to visualizing graphs, one usually prefers aesthetically pleasing drawings of graphs, where the structure can be grasped quickly and unambiguously.
Several cognitive studies suggest that human viewers prefer drawings with few crossings~\cite{Purchase2000,Purchase2002,Ware2002}, no or few bends per edge~\cite{Purchase1997}, and large crossing angles if edges cross~\cite{Huang2007,Huang2014,Huang2008}.
Moreover a drawing should have a good resolution so that its elements are well recognizable and distinguishable.
A possibility to achieve this is placing every vertex, bend point, and crossing point onto a grid point of a small regular grid~\cite{Hoffmann2014}.

We focus on the aforementioned criteria.
In terms of large crossing angles, we only allow the largest possible crossing angle, which is 90 degrees.
In terms of edge crossings we restrict to some classes of \emph{beyond-planar graphs}, which is a collective term for the primary attempts to generalize or extend the well-known and deeply studied planar graphs.
In terms of grid size, it is natural to restrict to sizes
    that are polynomial in the input size.  This ensures that
    drawings of large instances are readable.
In particular, polynomials of low degree are desirable in this context.
Under these restrictions, we aim to minimize the number of bend points---in our work the maximum number of bends per edge.

Among the beyond-planar graphs, we consider the prominent class of \emph{1-planar graphs}, that is, graphs
that admit a drawing where each edge is crossed at most once, and some subclasses.
The 1-planar graphs were introduced by Ringel~\cite{Ringel1965} in 1965; Kobourov et
al.~\cite{KobourovLM17} surveyed them recently.
Regarding the right-angle crossings and the number of bend points, we will define \emph{$\text{RAC}_k$ graphs},
that is, graphs that admit a poly-line drawing where all
crossings are at right angles and each edge has at most $k$
bends. The $\text{RAC}_k$ graphs were introduced by
Didimo et al.~\cite{Didimo2011}.

We investigate the relationships between
(certain subclasses of) 1-planar graphs and $\text{RAC}_k$ graphs that admit
drawings on a polynomial-size grid.
The prior work and our contributions are summarized in
Fig.~\ref{fig:DiagramBendRACClassRelation}.
A broader overview of beyond-planar graph classes is given in a recent survey by Didimo et al.~\cite{Didimo2018}.

\paragraph{Basic Terminology}
\begin{figure}[t]
	\begin{subfigure}[t]{0.11\linewidth}
		\centering
		\includegraphics[page=2]
		{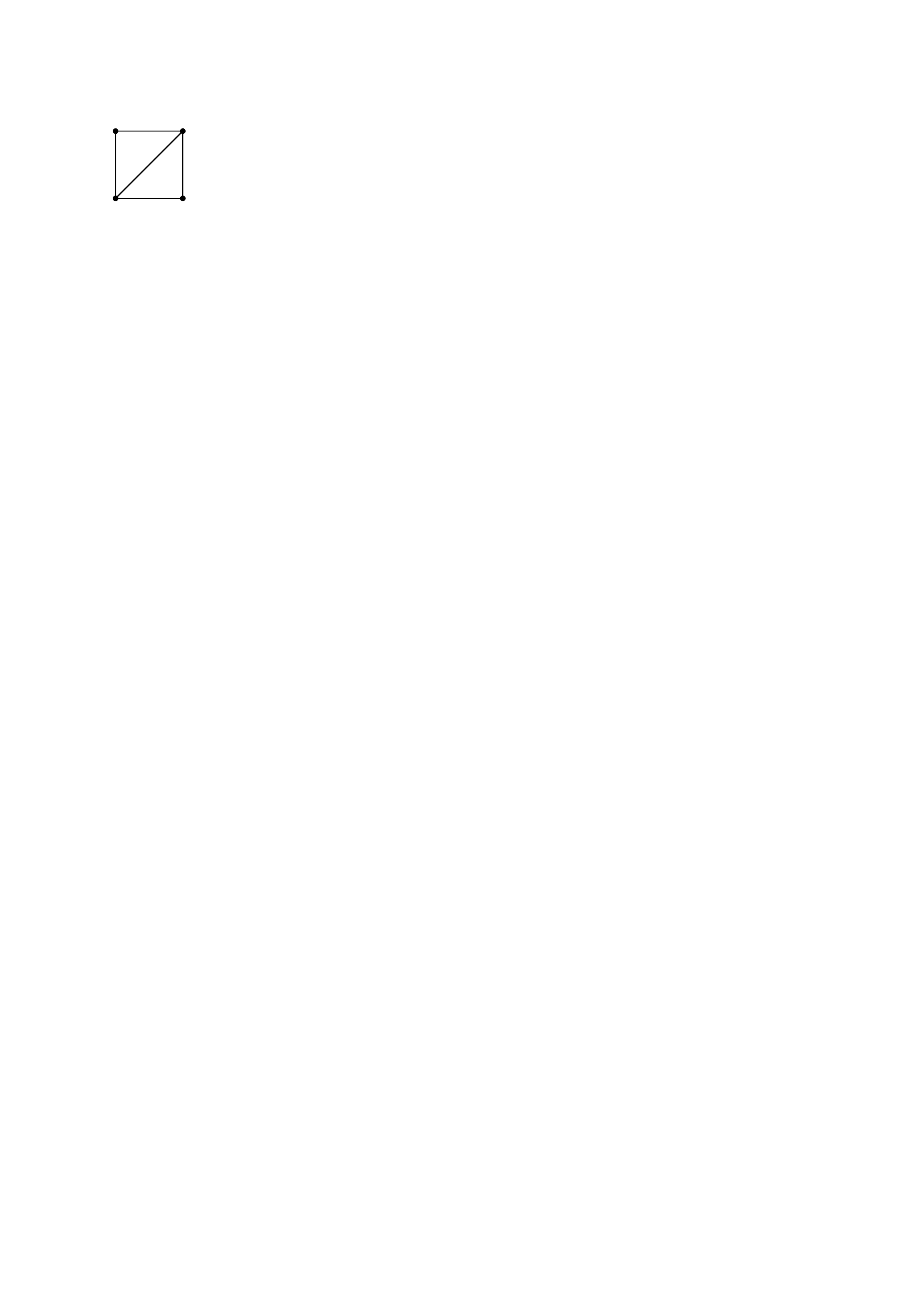}
		\caption{RAC$_0$ drawing}
		\label{fig:ExampleRACDrawing}
	\end{subfigure}
	\hfill
	\begin{subfigure}[t]{0.135\linewidth}
		\centering
		\includegraphics[page=3]
		{types_of_drawings}
		\caption{\mbox{IC-planar} drawing}
		\label{fig:ExampleIC-PlanarDrawing}
	\end{subfigure}
	\hfill
	\begin{subfigure}[t]{0.15\linewidth}
		\centering
		\includegraphics[page=4]
		{types_of_drawings}
		\caption{\mbox{NIC-planar} drawing}
		\label{fig:ExampleNIC-PlanarDrawing}
	\end{subfigure}
	\hfill
	\begin{subfigure}[t]{0.125\linewidth}
		\centering
		\includegraphics[page=5]
		{types_of_drawings}
		\caption{\mbox{1-planar} drawing}
		\label{fig:Example1-PlanarDrawing}
	\end{subfigure}
	\hfill
	\begin{subfigure}[t]{0.15\linewidth}
		\centering
		\includegraphics[page=6]
		{types_of_drawings}
		\caption{\mbox{1-planar} RAC$_1$ drawing}
		\label{fig:Example1-bend1-PlanarRACDrawing}
	\end{subfigure}
	\caption{Examples of different types of drawings.
		Figs.~\ref{fig:Example1-PlanarDrawing}
		and~\ref{fig:Example1-bend1-PlanarRACDrawing} show drawings
		of the same graph.
		Fig.~\ref{fig:Example1-bend1-PlanarRACDrawing} is taken from
		the Annotated Bibliography on
		1-Planarity~\cite{KobourovLM17}.}
	\label{fig:ExamplesOfDrawingTypes}
\end{figure}
A \emph{drawing} $\Gamma$ of a graph $G=(V,E)$ is a mapping that takes
each vertex $v \in V$ to a point in $\mathbb{R}^2$ and each edge $uv
\in E$ to a simple open Jordan curve in $\mathbb{R}^2$ such that the
endpoints of this curve are $\Gamma(u)$ and $\Gamma(v)$.
For convenience, we will refer to the images of vertices and edges
under~$\Gamma$ as vertices and edges, too.
The topologically connected regions of~$\mathbb{R}^2 \setminus \Gamma$ are the \emph{faces} of~$\Gamma$.
The infinite face of~$\Gamma$ is its \emph{outer} face;
the other faces are \emph{inner} faces.
Each face defines a circular list of bounding edges (resp. edge sides), which we call its \emph{boundary list}.
Two drawings of a graph~$G$ are \emph{equivalent} when they have the same set of boundary lists for their inner faces and the boundary list of the outer face of the one drawing equals the boundary list of the outer face of the other drawing.
Each equivalence class of drawings of $G$ is an \emph{embedding}.
We say that a drawing~$\Gamma$ \emph{respects} an embedding~$\mathcal{E}$ if~$\Gamma$ belongs to the equivalence class~$\mathcal{E}$ of drawings.
A \emph{$k$-bend (poly-line) drawing} is a drawing in which every edge
is drawn as a connected sequence of at most $k+1$ line segments.
The (up to)~$k$ inner vertices of an edge connecting these line segments are called \emph{bend points} or \emph{bends}.
A 0-bend drawing is more commonly referred to as a \emph{straight-line} drawing.
A \emph{drawing on the grid of size $w \times h$} is a drawing where
every vertex, bend point, and crossing point has integer coordinates
in the range $[0, w] \times [0, h]$.
In any drawing, we require that vertices, bends, and crossings are pairwise distinct points.

A drawing is \emph{1-planar} if every edge is crossed at most once.
A 1-planar drawing is \emph{independent-crossing planar}
(\emph{IC-planar}) if no two pairs of crossing edges share a vertex.
A 1-planar drawing is \emph{near-independent-crossing planar}
(\emph{NIC-planar}) if any two pairs of crossing edges share at most one
vertex.
A drawing is \emph{right-angle-crossing} (\emph{RAC}) if $(i)$ it is a poly-line drawing, $(ii)$~no more than two edges cross in the same point, and $(iii)$ in every crossing point the edges intersect at right angles.
We further specialize the notion of RAC drawings.
A drawing is RAC$_k$ if it is RAC and $k$-bend; it is
RAC$^\text{poly}$ if it is RAC and on a grid 
whose size is polynomial in its number of vertices.
Examples of IC-planar, NIC-planar, 1-planar, and RAC drawings are given in Fig.~\ref{fig:ExamplesOfDrawingTypes}.
The \emph{planar}, \emph{1-planar}, \emph{NIC-planar}, \emph{IC-planar}, and \emph{RAC$_k$} graphs are the graphs that admit a crossing-free, 1-planar, NIC-planar, IC-planar, and RAC$_k$ drawing, respectively.
More specifically, RAC$_k^\text{poly}$ is the set of graphs that admit a RAC$_k^\text{poly}$ drawing.
A \emph{plane}, \emph{1-plane}, \emph{NIC-plane}, and \emph{IC-plane}
graph is a graph given with a specific planar, 1-planar, NIC-planar,
and IC-planar embedding, respectively.
In a 1-planar embedding, the edge crossings are known and they are stored as if they were vertices.

We will denote an embedded graph by $(G, \mathcal{E})$ where $G$ is the graph and $\mathcal{E}$ is the embedding of this graph.
For a point~$p$ in the plane, let~$x(p)$ and~$y(p)$ denote its
x- and y-coordinate, respectively.  Given two points~$p$ and~$q$, we
denote the straight-line segment connecting them by~$\overline{pq}$
and its length, the Euclidean distance of~$p$ and~$q$,
by~$\|\overline{pq}\|$.

\begin{figure}[t]
	\centering
	\includegraphics[page=3]{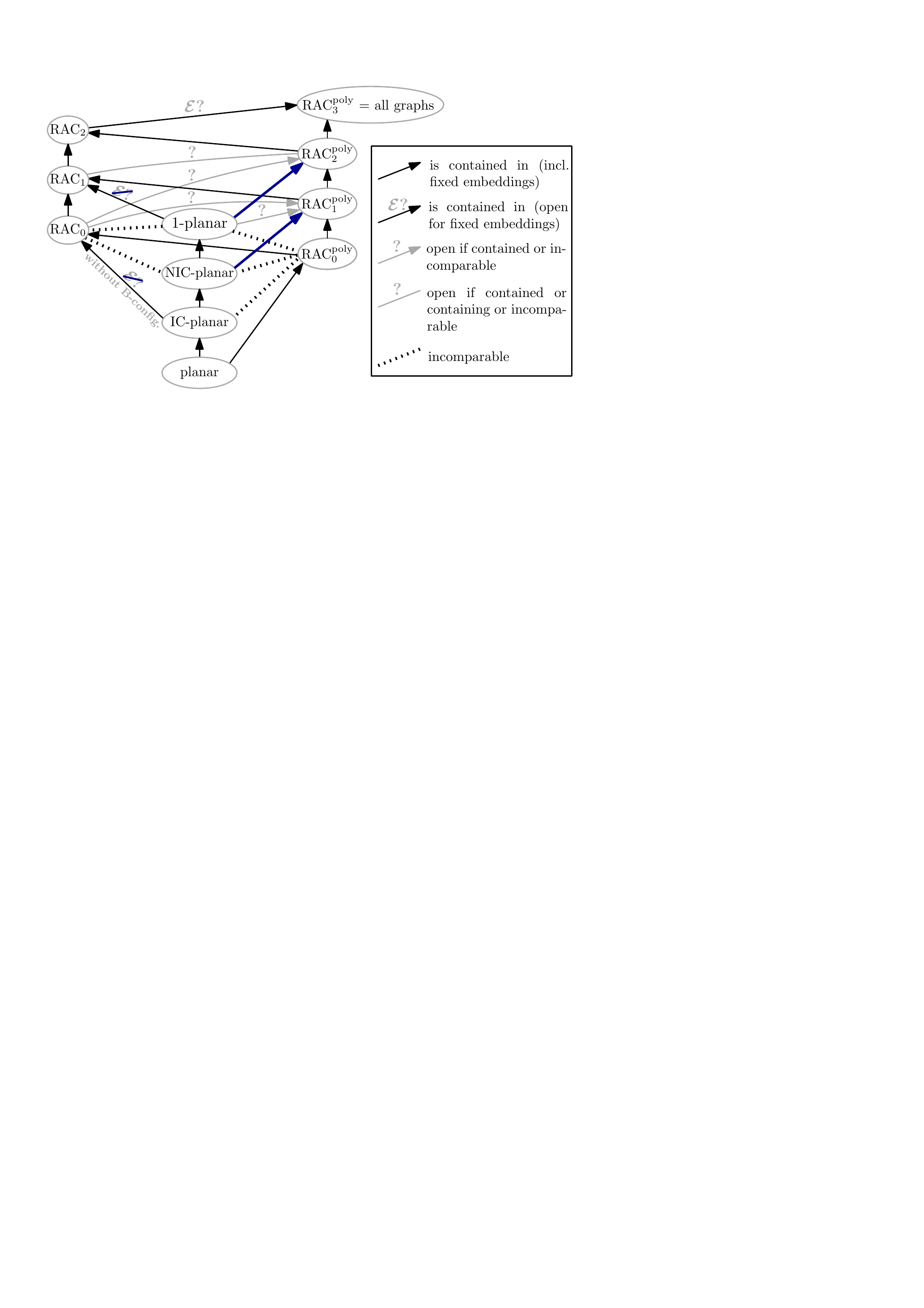}
	\caption{Relating some classes of (beyond-)planar graphs and RAC graphs. Our main results are the containment relationships indicated by the thick blue arrows.}
	\label{fig:DiagramBendRACClassRelation}
\end{figure}

\paragraph{Previous Work}

In the diagram in Fig.~\ref{fig:DiagramBendRACClassRelation}, we give
an overview of the relationships between classes of 1-planar graphs
and $\text{RAC}_k$ graphs.
Clearly, the planar graphs are a subset of the IC-planar graphs, which are a subset of the NIC-planar graphs, which are a subset of the 1-planar graphs.
It is well known that every plane graph can be drawn with straight-line edges on a grid of quadratic size~\cite{s-epgg-SODA90,Fraysseix1990}.
Every IC-planar graph admits an IC-planar $\text{RAC}_0$
drawing but not necessarily in polynomial area~\cite{Brandenburg2016}.
Moreover, there are graphs in $\text{RAC}_0^\text{poly}$ that
are not 1-planar~\cite{Eades2013} and, therefore, also not IC-planar.
The class of $\text{RAC}_0$
graphs is incomparable with the classes of NIC-planar
graphs~\cite{Bachmaier2017} and 1-planar graphs~\cite{Eades2013}.
Bekos et al.~\cite{Bekos2017} showed that every 1-planar graph admits a 1-planar RAC$_1$ drawing, but their recursive drawings may need exponential area.
Every graph admits a $\text{RAC}_3$ drawing in polynomial area, but this does not hold if a given embedding of the
graph must be preserved~\cite{Didimo2011}.

Later, our main tool will be the variant of the \emph{shift algorithm} by Harel and Sardas~\cite{Harel1998},
which is a generalization of the algorithm of Chrobak and Payne~\cite{Chrobak1995}, which in turn is based on the classical shift algorithm of de Fraysseix et al.~\cite{Fraysseix1990}.
The algorithm of Harel and Sardas runs in linear time and consists of two phases.
Given an $n$-vertex biconnected plane graph~$G$, in the first phase a \emph{biconnected canonical ordering}~$\Pi$ is computed.
\emph{Biconnected canonical orderings} are a generalization of canonical orderings (as used in the classical shift algorithm) that assume only biconnectivity instead of triconnectivity.
A biconnected canonical ordering~$\Pi = (v_1, v_2, \dots, v_n)$ is an
ordering of the vertices of~$G$ such that, for every $k \in \{2,3,\dots,n\}$,
\begin{itemize}
	\item the subgraph~$G_k$ of~$G$ induced by $\Pi_k := (v_1, \dots, v_{k})$ is connected,
	\item the edge~$v_1v_2$ lies on the boundary of the outer face of~$G_k$, and
	\item all vertices in $G - G_k$ lie within the outer face of~$G_k$.
\end{itemize}
Moreover, for every $k \ge 3$,
\begin{itemize}
\item the vertex~$v_k$ has one or more neighbors in~$G_{k-1}$.
  If~$v_k$ has exactly one neighbor~$u$ in~$G_{k-1}$, then~$v_k$ has a
  \emph{legal support}, that is, $u$ has a neighbor~$w$ on the
  boundary of the outer face of~$G_{k-1}$ that immediately follows or
  precedes~$v_k$ in the circular order of the neighbors of~$u$.
\end{itemize}
We call $w$ a \emph{support vertex} and the edge $uw$ a \emph{support edge} of~$v_k$.
Moreover, we say that a vertex~$u$ is \emph{covered} by~$v_k$ if~$u$ is on the boundary of the outer face of~$G_{k-1}$, but not on that of~$G_k$.
For a vertex~$v$, we define the set $C(v)$ as the set of all vertices
$u_1, \dots, u_m$ that are covered by~$v$, and we define 
$L(v)=C(v) \cup L(u_1) \cup \dots \cup L(u_m)$.
The set~$L(v)$ can be seen as the set of all vertices ``below''~$v$.
Unlike the classical shift algorithm, the algorithm of Harel and Sardas computes the (biconnected) canonical ordering bottom-up, which we will exploit in Section~\ref{sec:NIC-planar1-bendRACQuadraticArea}.
In the second phase, $G$ is drawn according to~$\Pi$.
This is done by an incremental drawing procedure using shifts on an integer grid known from the classical shift algorithm.
It starts by placing the vertices $v_1$, $v_2$, and $v_3$ onto the grid points $(0, 0)$, $(2, 0)$, and $(1, 1)$, respectively.
For every $k \in \{4,5,\dots,n\}$, the vertex~$v_k$ has a sequence of
at least two adjacent vertices or support vertices $u_1, \dots, u_m$
on the boundary of~$G_{k-1}$.
To add~$v_k$ to the drawing, all drawn vertices except for~$u_1$,
$L(u_1)$ and the vertices to the left of~$u_1$ (and the vertices below
them) are shifted to the right by one unit.
Then, $u_m$, $L(u_m)$ and all vertices to the right of~$u_m$ (and the
vertices below them) are again shifted to the right by one unit.
These shifts do not cause any new crossings.
After the shifts, $v_k$ is placed on the intersection of the line with
slope~$1$ through~$u_1$ and the line with slope~$-1$ through~$u_m$.
This intersection point is always a free grid point from which one can
``see'' $u_1, \dots, u_m$.
Therefore, the resulting drawing is a crossing-free straight-line drawing on a grid of size \mbox{$(2 |V(G)| - 4) \times (|V(G)| - 2)$}.

\paragraph{Our Contributions}
We contribute four new results;
two main results and two adaptations of prior results.

First, we constructively show that every NIC-plane graph admits a RAC$_1$ drawing in quadratic area; see Section~\ref{sec:NIC-planar1-bendRACQuadraticArea}.
This improves upon a side result by Liotta and Montecchiani~\cite{Liotta2016}, who showed that every IC-plane graph admits a RAC$_2$ drawing on a grid of quadratic size.
In Fig.~\ref{fig:full_drawing-after} we give a full example of a NIC-plane graph drawn with this algorithm.

Second, we constructively show that every 1-plane graph admits a RAC$_2$ drawing in polynomial area; see Section~\ref{sec:1-planar2-bendRACQuadraticArea}.

Beside these two main results, we show how to preserve a given
embedding when computing RAC drawings; more precisely
\begin{itemize}
	\item when computing RAC$_1$ drawings of 1-plane graphs (by
	adapting an algorithm of Bekos et al.~\cite{Bekos2017}; see
	Section~\ref{sec:1-planar1-bendRACSameEmbedding})
	and
	\item when computing RAC$_0$ drawings of IC-plane graphs (by adapting
	an algorithm of Brandenburg et al.~\cite{Brandenburg2016}; see
	Section~\ref{sec:IC-planar0-bendRACSameEmbedding}).
	Note that Thomassen~\cite{Thomassen1988} has characterized the
	straight-line drawable 1-plane graphs by certain forbidden
	configurations.  Therefore, this adaptation works only for IC-plane
	graphs without these configurations, i.e., for all straight-line
	drawable IC-plane graphs. 
\end{itemize}
\newcounter{theoremCounterOnePlaneOneBendRAC}
\setcounter{theoremCounterOnePlaneOneBendRAC}{\thetheorem} 
\newcommand{\theoremOnePlaneOneBendRAC}{%
	Any $n$-vertex 1-plane graph admits an embedding-preserving
	RAC$_1$ drawing. It can be computed in~$\mathcal{O}(n)$~time.
}%
\newcounter{theoremCounterICPlaneStraightLineRAC}
\setcounter{theoremCounterICPlaneStraightLineRAC}{\thetheorem} 
\newcommand{\theoremICPlaneStraightLineRAC}{%
	Any straight-line drawable $n$-vertex IC-plane graph admits an em\-bed\-ding-pre\-ser\-ving RAC$_0$ drawing.
	It can be computed in~$\mathcal{O}(n^3)$~time.
}

\section{NIC-Planar 1-Bend RAC Drawings in Quadratic Area}
\label{sec:NIC-planar1-bendRACQuadraticArea}

In this section we constructively show that quadratic area is
sufficient for RAC$_1$ drawings of NIC-planar graphs.  We prove the
following.
\begin{figure}[tb]
	\begin{subfigure}[t]{0.2 \textwidth}
		\centering
		\includegraphics[page=1]{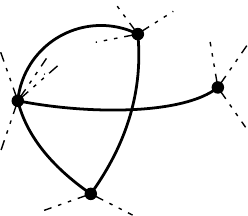}
		\caption{crossing as it initially appears}
		\label{fig:UncoveredCrossing}
	\end{subfigure}
	\hfill
	\begin{subfigure}[t]{0.27 \textwidth}
		\centering
		\includegraphics[page=2]{preprocessing_modification}
		\caption{empty kite and subdivided original edge}
		\label{fig:EmptyKite}
	\end{subfigure}
	\hfill
	\begin{subfigure}[t]{0.21 \textwidth}
		\centering
		\includegraphics[page=3]{preprocessing_modification}
		\caption{empty quadrangle}
		\label{fig:EmptyQuadrangle}
	\end{subfigure}
	\hfill
	\begin{subfigure}[t]{0.21 \textwidth}
		\centering
		\includegraphics[page=4]{preprocessing_modification}
		\caption{divided quadrangle}
		\label{fig:DividedQuadrangle}
	\end{subfigure}
	
	\caption{Modifying the crossings and computing the biconnected canonical ordering.}
	\label{fig:Preprocessing}
\end{figure}%
\begin{theorem}
	\label{thm:NIC-planar1-bendRACQuadraticArea}
	Any $n$-vertex NIC-plane graph $(G, \mathcal{E})$ admits a
	NIC-planar RAC$_1$ drawing that respects~$\mathcal{E}$ and lies
	on a grid of size $\mathcal{O}(n) \times \mathcal{O}(n)$.
	The drawing can be computed in~$\mathcal{O}(n)$~time.
\end{theorem}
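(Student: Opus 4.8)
The plan is to reduce the problem to the biconnected variant of the shift algorithm (Harel and Sardas) applied to a suitable \emph{plane} skeleton of $(G,\mathcal{E})$, and then to reinsert the crossing edges locally, one bend each, so that they cross at right angles on the grid. First I would build an auxiliary plane graph $H$ from $(G,\mathcal{E})$. For each crossing of $\mathcal{E}$, with endpoints $a,b,c,d$ in this cyclic order around the crossing and crossing edges $ac$ and $bd$, I add the four ``kite'' edges $ab,bc,cd,da$ (marking those not in $G$ as dummy), routed inside $\mathcal{E}$ so that the quadrilateral $abcd$ encloses only this crossing; since $\mathcal{E}$ is NIC-planar, the kites of distinct crossings are pairwise almost disjoint, so this can be done consistently for all crossings. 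I then delete $ac$ and $bd$ and fill the interior of each kite with a small planar gadget (cf.\ Fig.~\ref{fig:Preprocessing}), e.g.\ by subdividing one crossing edge and dividing the quadrangle into triangles, so that the faces inside the kite record where $ac$ and $bd$ must eventually run. Finally I augment $H$ to be biconnected (and, if convenient, internally triangulated) by further dummy edges. All of this keeps $|V(H)| = \mathcal{O}(n)$, and the plane embedding $\mathcal{E}_H$ of $H$ restricts to $\mathcal{E}$ once the dummy items are removed.

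Next I run the Harel--Sardas algorithm on $(H,\mathcal{E}_H)$: it computes a biconnected canonical ordering $\Pi$ in $\mathcal{O}(n)$ time and then a crossing-free straight-line drawing on a grid of size $(2|V(H)|-4)\times(|V(H)|-2) = \mathcal{O}(n)\times\mathcal{O}(n)$. The point of choosing $\Pi$ carefully is to force every kite to be drawn in a normalized shape. Here the \emph{bottom-up} nature of the Harel--Sardas ordering (emphasized in the excerpt) is essential: I want the corners $a,b,c,d$ of each kite, together with any center vertex of its gadget, to be inserted essentially as one block and in a prescribed relative order, so that in the output drawing the empty quadrilateral $abcd$ has a specific form (for instance, two opposite corners placed ``low'' and the remaining two stacked above). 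This should only require additional bookkeeping in the first phase of the algorithm (ensuring a legal support whenever such a block starts), without changing its linear running time.

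Finally, at each kite I remove the gadget's dummy vertices and edges and redraw $ac$ and $bd$ inside the now-empty quadrilateral $abcd$, each with a single bend, so that they intersect in exactly one point and do so at a right angle, with the bends and the crossing point on grid points. The normalized shape from the previous step is exactly what permits this with only one bend per edge: one diagonal is routed to pass through the crossing point along a horizontal segment and the other along a vertical segment, with each edge's bend used to reach its two endpoints inside the quadrilateral. Refining the grid by a constant factor, if needed, keeps the size $\mathcal{O}(n)\times\mathcal{O}(n)$. Deleting the remaining dummy edges yields a drawing of $G$ that respects $\mathcal{E}$: it is 1-planar (each original crossing is the only crossing inside its kite, the kites occupy pairwise disjoint regions, and the skeleton is drawn crossing-free), hence NIC-planar since $\mathcal{E}$ is; it is RAC$_1$ by construction; and preprocessing, the shift algorithm, and postprocessing are each $\mathcal{O}(n)$.

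The main obstacle is the interface between the second and third steps: guaranteeing, simultaneously for \emph{all} crossings, that the shift algorithm outputs each kite in a shape from which the two diagonals admit a one-bend-per-edge right-angle crossing whose bends and crossing point land on grid points. This couples one global object — the canonical ordering and the shifts it triggers — to many local geometric constraints at once, so the real work is designing $\Pi$ (and the gadget) to meet all of them while keeping each grid dimension linear. The remaining points (the biconnectivity/triangulation augmentation, making the reinserted edges leave $a,b,c,d$ at the ports dictated by $\mathcal{E}$, and the constant-factor grid refinement) are routine.
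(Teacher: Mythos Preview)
Your high-level plan matches the paper's: enclose each crossing by a kite (subdividing existing edges only to avoid multi-edges), delete the two crossing edges to leave an empty quadrangle, make the resulting plane graph biconnected, run Harel--Sardas, and then reinsert the two diagonals locally with one bend each on a constant-factor refined grid. Where you diverge is exactly at the step you flag as the main obstacle, and your proposed resolution is not the one that works.

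You suggest forcing the four kite corners to enter the canonical ordering ``essentially as one block and in a prescribed relative order'' so that every kite comes out in a single normalized shape. The paper does the opposite: it makes \emph{no} attempt to control when the four corners appear in~$\Pi$. Instead it exploits the bottom-up computation of~$\Pi$ as follows. When the \emph{first} corner~$a$ of a quadrangle is reached, it inserts the diagonal~$ac$ to the opposite corner (so the diagonal is chosen adaptively, not fixed in preprocessing). When the \emph{last} corner~$v_{\mathrm{last}}$ is reached, it falls into one of three cases---$v_{\mathrm{last}}=c$; or $v_{\mathrm{last}}\in\{b,d\}$ with the other of $\{b,d\}$ covered by~$c$; or $v_{\mathrm{last}}\in\{b,d\}$ with neither covered---and in the latter two cases it injects \emph{one} extra dummy vertex into~$\Pi$ on the fly (a ``shift vertex'' adjacent to $a,c$, respectively a subdivision vertex on $av_{\mathrm{lower}}$) to buy just enough room. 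After a single factor-$2$ grid refinement, each case gets its own explicit bend-point coordinates; the crossing is axis-parallel in Cases~1 and~3 and on lines of slope~$\pm 1$ in Case~2. Your block-insertion idea is not obviously feasible (the kite corners have non-kite neighbors constraining where they can sit in any valid biconnected canonical ordering, and in a NIC-plane graph a vertex may belong to two kites), and even if it were, you would still owe the concrete shape and the bend-point formulas. The missing idea is that you do not need one normalized shape: a short case analysis of the shapes that \emph{can} arise, plus at most one dummy vertex per crossing inserted into~$\Pi$ during its computation, is enough.
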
%
In Fig.~\ref{fig:full_drawing-after} we show a complete drawing of a NIC-plane graph generated by our algorithm.
\paragraph{Preprocessing}
Our algorithm gets an $n$-vertex NIC-plane graph $(G, \mathcal{E})$ as input.
We first aim to make $(G, \mathcal{E})$
biconnected and planar so that we can draw it using the algorithm by
Harel and Sardas~\cite{Harel1998}.
Around each crossing in~$\mathcal{E}$, we insert up to four dummy
edges to obtain \emph{empty kites}.  A \emph{kite} is a $K_4$ that is
embedded such that (i)~every vertex lies on the boundary of the outer
face, and (ii)~there is exactly one crossing, which
does not lie on the boundary of the outer face.
A~kite~$K$ as a subgraph of an embedded graph $H$ is said to be \emph{empty} if there is no edge of~$H \backslash K$ that is on an inner face of~$K$ or crosses edges of $K$.
Inserting a dummy edge could create a pair of parallel
edges. If this happens, we subdivide the original edge participating in this
pair by a dummy vertex (see the transition from
Fig.~\ref{fig:UncoveredCrossing} to~\ref{fig:EmptyKite}). 
Note that we never create parallel dummy edges because $G$ is NIC-planar, which means that two different crossings have at most one vertex in common, but we would need two common vertices for a dummy edge to be duplicated.
After this, we remove both crossing edges from each empty kite and
obtain \emph{empty quadrangles} (see Fig.~\ref{fig:EmptyQuadrangle}).
We store each such empty quadrangle in a list~$Q$.
At the end of the preprocessing, we make the resulting plane graph
biconnected via, e.g., the algorithm of Hopcroft and
Tarjan~\cite{Hopcroft1973}.
Since each empty quadrangle is contained in a biconnected component, no edges are inserted into it.
Let~$(G', \mathcal{E}')$ be the resulting plane biconnected graph.

\paragraph{Drawing Step}
Now, we draw a graph that we obtain from $(G', \mathcal{E}')$ by using the variant of the shift algorithm by Harel and Sardas~\cite{Harel1998}.
We will exploit that it computes the biconnected canonical ordering bottom-up instead of top-down.

We perform the following additional operations when we compute the
biconnected canonical ordering~$\hat{\Pi}$.  Whenever we reach an empty quadrangle
$q = (a, b, c, d)$ of the list~$Q$ for
the first time, i.e., when the first vertex of $q$---say~$a$---is
added to the biconnected canonical ordering, we insert an
edge inside $q$ from $a$ to the vertex opposite~$a$ in~$q$, that is, to~$c$. 
We call the resulting structure a \emph{divided quadrangle} (see
Fig.~\ref{fig:DividedQuadrangle}).
In two special cases, we perform further modifications of the graph.
They will help us to guarantee a correct reinsertion of the
crossing edges in the next step of the algorithm.
Namely, when we encounter the last vertex~$v_\text{last} \in
\{b,c,d\}$ of~$q$, we distinguish three cases.
\begin{description}
	\item[Case~1:] $v_\text{last} = c$ (see Fig.~\ref{fig:Case1-before}).
		
	In this case, no extra operations are performed.
	
	\item[Case~2:] $v_\text{last} \in \{b,d\}$, and the other of $\{b,d\}$ is covered by~$c$ (see Fig.~\ref{fig:Case2-before}).
	
	We insert a dummy vertex~$v_\text{shift}$, which we call \emph{shift
		vertex}, into the current biconnected canonical ordering directly
	before~$v_\text{last}$ and make it adjacent to~$a$ and~$c$.
	Observe that, if $v_\text{shift}$ is the $k$-th vertex in~$\hat{\Pi}$, this still yields a valid biconnected canonical ordering since $v_\text{shift}$ has both neighbors in~$\hat{\Pi}_{k-1}$ and is on the outer face of the subgraph induced by~$\hat{\Pi}_{k-1}$.
	Later, we will remove~$v_\text{shift}$, but for now it causes the
	algorithm of Harel and Sardas to shift~$a$ and~$c$ away from each
	other by two units because~$v_\text{shift}$ is only adjacent to~$a$ and~$c$ and effects a regular shift between them before $v_\text{last}$ is added.
	
	\item[Case~3:] $v_\text{last} \in \{b,d\}$, and neither~$b$
	nor~$d$ is covered by~$c$ (see Fig.~\ref{fig:Case3-before}).
	
	Let $\{v_\text{lower}\} = \{b,d\} \setminus v_\text{last}$.
	We subdivide the edge~$av_\text{lower}$ via a dummy vertex~$v_\text{dummy}$.
	If $av_\text{lower}$ is an original edge of the input graph, this edge will be bent at $v_\text{dummy}$ in the final drawing.
	We insert $v_\text{dummy}$ into the current biconnected canonical ordering directly before~$v_\text{lower}$.
	To obtain a divided quadrangle again, we insert the dummy edge~$av_\text{lower}$, which we will remove before we reinsert the
	crossing edges.
	This will give us some extra space inside the triangle $(a, v_\text{dummy}, v_\text{lower})$ for a bend point.
	Inserting $v_\text{dummy}$ as $k$-th vertex into~$\hat{\Pi}$ keeps $\hat{\Pi}$ valid since~$v_\text{dummy}$ uses the edge incident to~$a$ that precedes the subdivided edge in the circular order around~$a$ (and is in the subgraph induced by~$\hat{\Pi}_{k-1}$) as a support edge.
	This edge would have been covered by~$v_\text{lower}$ otherwise.
	Then, $v_\text{lower}$ has at least two neighbors in~$\hat{\Pi}_k$, namely $a$ and~$v_\text{dummy}$, and is on the outer face of the subgraph induced by~$\hat{\Pi}_k$.
	Hence, the biconnected canonical ordering remains valid.
\end{description}
We draw the resulting plane biconnected $\hat{n}$-vertex graph
$(\hat{G}, \hat{\mathcal{E}})$ according to its biconnected canonical ordering~$\hat{\Pi}$ via the algorithm by Harel and Sardas and obtain
a crossing-free drawing~$\hat{\Gamma}$.
We do not modify the actual drawing phase.

\begin{figure}[t]
	\begin{subfigure}[t]{0.25 \textwidth}
		\centering
		\includegraphics[page=1]{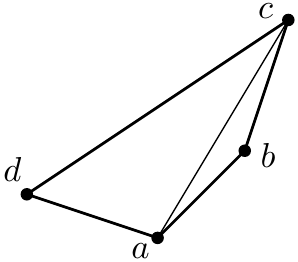}
		\caption{Case~1; $v_\text{last} = c$}
		\label{fig:Case1-before}
	\end{subfigure}
	\hfill
	\begin{subfigure}[t]{0.32 \textwidth}
		\centering
		\includegraphics[page=1]{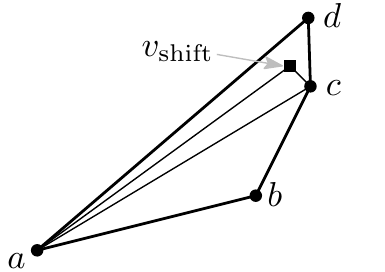}
		\caption{Case~2; $v_\text{last} = d$
			and $b$ is covered by~$c$}
		\label{fig:Case2-before}
	\end{subfigure}
	\hfill
	\begin{subfigure}[t]{0.34 \textwidth}
		\centering
		\includegraphics[page=1]{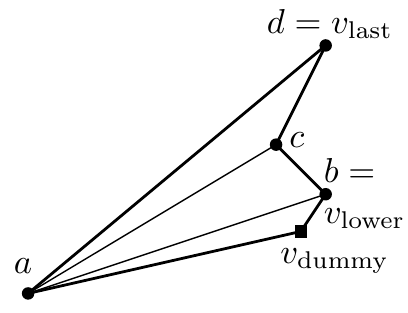}
		\caption{Case~3; $v_\text{last} = d$
			and $b$ is not covered by $c$}
		\label{fig:Case3-before}
	\end{subfigure}
	
	\bigskip
	
	\begin{subfigure}[t]{0.25 \textwidth}
		\centering
		\includegraphics[page=2]{nic_case_1}
		\caption{Case~1}
		\label{fig:Case1-after}
	\end{subfigure}
	\hfill
	\begin{subfigure}[t]{0.32 \textwidth}
		\centering
		\includegraphics[page=2]{nic_case_2}
		\caption{Case~2}
		\label{fig:Case2-after}
	\end{subfigure}
	\hfill
	\begin{subfigure}[t]{0.34 \textwidth}
		\centering
		\includegraphics[page=2]{nic_case_3}
		\caption{Case~3}
		\label{fig:Case3-after}
	\end{subfigure}
	
	\caption{Divided quadrangles produced in the three cases of the drawing step (a)--(c) and the crossing edges after the reinsertion step (d)--(f) in our algorithm.  For orientation, lines with slope $1$ or $-1$ are dashed violet.}
	\label{fig:CasesForEdgeInsertion}
\end{figure}

\paragraph{Postprocessing (Reinserting the Crossing Edges)}
We refine the underlying grid of~$\hat{\Gamma}$ by a factor of~$2$ in
both dimensions. 
Let $q = (a, b, c, d)$ be a quadrangle in~$Q$, where $a$ is the first
and $v_\text{last}$ the last vertex in $\hat{\Pi}$ among the vertices in~$q$. 
From~$q$, we first remove the chord edge~$ac$ and obtain an
empty quadrangle.  Then, we distinguish three cases for reinserting
the crossing edges that we removed in the preprocessing.  These are
the same cases as in the description of the modified computation of
the biconnected canonical ordering before.
In this case distinction we omit some lengthy but straight-forward
calculations; see Zink's master's
thesis~\cite{Zink2017} for the details.

\begin{description}
	\item[Case~1:] $v_\text{last} = c$ (see Fig.~\ref{fig:Case1-before}).
	
	Since $c$ is adjacent to $a$, $b$, and $d$ in $\hat{G}$ and was added to the drawing after them, it has the largest $y$-coordinate among the vertices in $q$.
	Assume that $y(d)$ is smaller or equal to~$y(b)$ since the other case is symmetric.
	An example of a quadrangle in this case before and after the reinsertion of the crossing edges is given in Figs.~\ref{fig:Case1-before} and~\ref{fig:Case1-after}, respectively.
	We will have a crossing point at $(x(a), y(d))$.
	To this end, we insert the edge~$ac$ with a bend at $e_{ac} = (x(a), y(d) + 1)$ and we insert the edge~$bd$ with a bend at $e_{bd} = (x(a) + 1, y(d))$.
	Clearly, the crossing is at a right angle.
	Observe that $q$ is convex since $c$ is the last drawn vertex of~$q$, $c$ is adjacent to each of $(b, a, d)$ in this circular order in the embedding, $a$ was drawn first, and $a$ is also adjacent to~$b$ and~$d$.
	Moreover, observe that both bend points lie inside~$q$.
	Therefore, it follows that both crossing edges lie completely inside~$q$.
	
	\item[Case~2:] $v_\text{last} \in \{b,d\}$, and the other of $\{b,d\}$ is covered by~$c$ (see Fig.~\ref{fig:Case2-before}).
	
	Assume that $y(d)>y(b)$; the other case is symmetric.
	An example of a quadrangle in this case before and after the
	reinsertion of the crossing edges is given in
	Figs.~\ref{fig:Case2-before} and~\ref{fig:Case2-after}, respectively.
	We remove $v_\text{shift}$ in addition to removing the edge $ac$.
	We define the crossing point~$p_\text{cross} = (x_\text{cross}, y_\text{cross})$ as the intersection point of the lines with slope $1$ and $-1$ through $c$ and $b$, respectively.
	The coordinates of this crossing point are $x_\text{cross} = (x(c) -
	y(c) + x(b) + y(b))/2$ and $y_\text{cross} = (-x(c) + y(c) + x(b) +
	y(b))/2$.
	Since we refined the grid by a factor of~2 in each dimension, the above coordinates are both integers.
	We place the two bend points onto the same lines at the closest grid points that are next to~$p_\text{cross}$,~i.e., 
	we draw the edge~$ac$ with a bend point
	at~$e_{ac} = (x_\text{cross} - 1, y_\text{cross} - 1)$ and we
	insert the edge $bd$ with a bend point at~$e_{bd} =
	(x_\text{cross} - 1, y_\text{cross} + 1)$.
	We do not intersect or touch the edge~$ad$ because we
	shifted~$a$ far enough away from~$c$ by the extra shift due
	to~$v_\text{shift}$.
	Moreover, the points~$e_{ac}$ and~$p_\text{cross}$ on the line with slope $1$ through $c$ are inside the empty quadrangle~$q$ since $b$ is covered by~$c$ (then~$b$ is below the line with slope~$1$ through~$c$) and $y(b)$ is at most equal to~$y(e_{ac})$.
	
	\item[Case~3:] $v_\text{last} \in \{b,d\}$, and neither~$b$	nor~$d$ is covered by~$c$ (see Fig.~\ref{fig:Case3-before}).
	
	Assume that $y(d)>y(b)$; again, the other case is symmetric.
	An example of a quadrangle in this case before and after the
	reinsertion of the crossing edges is given in
	Figs.~\ref{fig:Case3-before} and~\ref{fig:Case3-after}, respectively.
	Note that the edge~$ab$ is a dummy edge, which we inserted during the computation of~$\hat{\Pi}$, and next to this edge, there is the path~$av_\text{dummy}b$.
	This path is the former edge~$ab$.
	We will reinsert the edges $ac$ and $bd$ such that they cross in $(x(c), y(b))$.
	We will bend the edge $bd$ on the line with slope $1$ through $c$ at $y = y(b)$ because from this point we always ``see'' $d$ inside $q$.
	So, we define $x_\text{bend} := x(c) - \Delta y$ with $\Delta y := y(c) - y(b)$.
	First, we remove the dummy edge~$ab$.
	Second, we insert the edge~$ac$ with a bend point at~$e_{ac} = (x(c), y(b) - 1)$.  
	Third, we insert the edge~$bd$ with a bend point at~$e_{bd} = (x_\text{bend}, y(b))$.
	Note that $e_{ac}$ might be below the straight-line segment $\overline{a b}$ since $a$ could have been shifted far away from~$c$.
	However, $e_{ac}$ cannot be on or below the path~$av_\text{dummy}b$ because $y(v_\text{dummy})<y(e_{ac})$ and
	the slope of the line segment~$\overline{v_\text{dummy} b}$ is either
	greater than~$1$ or negative.
	Therefore, the crossing edges $ac$ and $bd$ lie completely
	inside the pentagonal face $(a, v_\text{dummy}, b, c, d)$.
\end{description}

\paragraph{Result}
After we have reinserted the crossing edges into each quadrangle
of~$Q$, we remove all dummy edges and transform the remaining dummy
vertices to bend points.
The resulting drawing~$\Gamma$ is a $\text{RAC}_1$ drawing that preserves
the embedding of the NIC-plane input graph~$(G,\mathcal{E})$.
We state the following lemmas regarding the graph size after inserting dummy vertices and the finally used grid size.
\begin{lemma}
	\label{lem:O(n)=O(n')=O(hat(n))}
	Let $G$ be the input graph, $G'$ be the graph after the preprocessing, and $\hat{G}$ be the graph after the computation of the biconnected canonical ordering.
	Let $n$, $n'$, and~$\hat{n}$ be the number of vertices in $G$, $G'$, and $\hat{G}$, respectively.
	It holds that $n' \leq 3.4 n - 4.8$ and $\hat{n} \leq 4 n - 6$.
\end{lemma}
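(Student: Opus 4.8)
The plan is to reduce both inequalities to a single parameter, the number $c$ of crossings of the given NIC-plane embedding~$\mathcal{E}$, and then to count how many dummy vertices each crossing can create along the way. The first and only non-routine step is a sharp bound on $c$: I would show $c \le \tfrac{3}{5}(n-2)$. To this end, planarize~$\mathcal{E}$ by replacing every crossing with a degree-$4$ dummy vertex, obtaining a plane graph on $n+c$ vertices whose $c$ crossing vertices form an independent set, each adjacent to the four distinct endpoints of its crossing pair. Let $B$ be the bipartite graph consisting of these $4c$ incidences. Then $B$ is simple and planar, and it has no $4$-cycle: a $4$-cycle in $B$ would exhibit two crossing pairs sharing two endpoints, which a NIC-planar drawing forbids. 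A simple planar bipartite graph without $4$-cycles has girth at least~$6$, hence at most $\tfrac{3}{2}(|V|-2)$ edges; since $|E(B)| = 4c$ and $|V(B)| \le n+c$, this gives $4c \le \tfrac{3}{2}(n+c-2)$, i.e.\ $c \le \tfrac{3}{5}(n-2)$. (Alternatively, this crossing bound can be taken from the literature on NIC-planar graphs.)

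For $n'$: going from $G$ to $G'$, the only vertices created are those used to subdivide an original edge that would otherwise become parallel to an inserted kite edge --- the concluding biconnectivity augmentation via Hopcroft--Tarjan inserts edges only. Each crossing contributes at most the four frame edges of its kite, hence at most four subdivisions; moreover the (at most four) original edges subdivided for one crossing are pairwise distinct, and no original edge is a frame edge of two different crossings, since in either case two crossing pairs would share two vertices, again impossible in a NIC-plane drawing. Hence at most $4c$ dummy vertices appear, so $n' \le n + 4c \le n + \tfrac{12}{5}(n-2) = 3.4\,n - 4.8$.

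For $\hat n$: going from $G'$ to $\hat G$, the only additional vertices are the shift vertices of Case~2 and the subdivision vertices~$v_{\text{dummy}}$ of Case~3 in the ordering step. Every quadrangle of~$Q$, equivalently every crossing, falls into exactly one of the three cases and spawns at most one such vertex, so $\hat n \le n' + c \le n + 5c \le n + 3(n-2) = 4\,n - 6$.

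I expect the $4$-cycle-freeness argument (equivalently, the sharp factor~$\tfrac{3}{5}$) to be the crux: the weaker bound $c \le n-2$, which already holds for all $1$-plane graphs, would only yield $n' \le 5n - O(1)$ and $\hat n \le 6n - O(1)$. Everything else is bookkeeping, and the reason the per-crossing dummy-vertex counts stay at~$4$ in preprocessing and at~$1$ in the ordering phase is precisely the defining NIC property that any two crossing pairs share at most one vertex.
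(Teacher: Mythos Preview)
Your proof is correct and follows essentially the same approach as the paper: bound the number of crossings by $0.6n-1.2$, then count at most four subdivision vertices per crossing for~$n'$ and at most one extra dummy per crossing for~$\hat n$. The only difference is that the paper simply cites the crossing bound for NIC-planar graphs from the literature, whereas you supply a self-contained proof via the planar bipartite $C_4$-free incidence graph; you are also slightly more explicit than the paper about why no edge gets subdivided twice.
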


\begin{proof}
	In the first step of the preprocessing, we create empty kites around every crossing.
	By creating the empty kites for every crossing, there are edges added to the graph (but we do not count edges here) and there are edges subdivided.
	When we subdivide an edge, we add a new vertex.
	There are at most four edges per crossing that are subdivided.
	The number~$\crn(\mathcal{E})$ of crossings in an $n$-vertex NIC-planar embedding $\mathcal{E}$ is bounded by $0.6 n - 1.2$~\cite{Czap2014,Zhang2014}.
	Using this, we can bound the number $n_\text{subdivide}$ of vertices that are added in this step to:
	\begin{equation*}
	n_\text{subdivide} \leq 4 \crn(\mathcal{E}) \leq 2.4 n - 4.8
	\end{equation*}
	In the second step of the preprocessing, we make the graph biconnected.
	To accomplish this, we only insert edges and the number of vertices does not increase. 
	So the number $n'$ of vertices of the graph $G'$ is:
	\begin{equation*}
	n' = n + n_\text{subdivide} \leq 3.4 n - 4.8
	\end{equation*}
	While computing the biconnected canonical ordering~$\hat{\Pi}$, at most one dummy vertex is added per crossing---either as shift vertex in Case~2 or as dummy vertex in Case~3.
	So the number $n_{\hat{\Pi}}$ of vertices added there is:
	\begin{equation*}
	n_{\hat{\Pi}} \leq \crn(\mathcal{E}) \leq 0.6 n - 1.2
	\end{equation*}
	And in total:
	\begin{equation*}
	\hat{n} = n' + n_{\hat{\Pi}} \leq (3.4 n - 4.8) + (0.6 n - 1.2) = 4 n - 6 \qedhere
	\end{equation*}
\end{proof}

Thus, only linearly many new vertices are added when constructing~$G'$ from~$G$ and~$\hat{G}$ from~$G'$.

\newcounter{lemmaCounterGridSizeNICPlanarRAC}
\setcounter{lemmaCounterGridSizeNICPlanarRAC}{\thetheorem} 
\newcommand{\lemmaGridSizeNICPlanarRAC}{%
	Every vertex, bend point, and crossing point of the drawing
	returned by our algorithm lies on a grid of size at most
	$(16n - 32) \times (8n - 16)$.
} 
\begin{lemma}
	\label{lem:NICPlane1BendAlgoBoundToQuadraticSize}
	\lemmaGridSizeNICPlanarRAC
\end{lemma}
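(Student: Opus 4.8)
The plan is to follow the grid size through the three stages of the algorithm: the Harel--Sardas drawing of~$\hat G$, the refinement by a factor of~$2$, and the reinsertion of the crossing edges. The only non-bookkeeping point will be that the last stage does not enlarge the bounding box.

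First I would invoke the guarantee of Harel and Sardas: their algorithm draws the plane biconnected $\hat n$-vertex graph $\hat G$ on a grid of size $(2\hat n - 4) \times (\hat n - 2)$, so $\hat\Gamma$ is contained in the axis-parallel box $[0,\, 2\hat n - 4] \times [0,\, \hat n - 2]$. Substituting the bound $\hat n \le 4n - 6$ from Lemma~\ref{lem:O(n)=O(n')=O(hat(n))} gives a width of at most $8n - 16$ and a height of at most $4n - 8$. The postprocessing then refines the grid by a factor of~$2$ in each dimension, which scales all coordinates, and hence the bounding box, by~$2$; thus after refinement every vertex of $\hat\Gamma$ lies in the box $[0,\, 16n - 32] \times [0,\, 8n - 16]$, and, importantly, every such vertex now has even coordinates, so there is a free grid layer between any two consecutive former coordinates for the reinserted bends and crossings to use.

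It remains to check that the crossing points and bend points introduced in Cases~1--3 of the reinsertion step stay inside this box. For this I would simply reuse what the case analysis already establishes: in each case the two reinserted crossing edges---and in particular their two bend points $e_{ac}$, $e_{bd}$ and their common crossing point---lie inside the empty quadrangle~$q$ (Cases~1 and~2) or inside the pentagonal face $(a, v_\text{dummy}, b, c, d)$ (Case~3) of the refined drawing $\hat\Gamma$. Since every face of $\hat\Gamma$ is contained in its bounding box, all of these points have coordinates in $[0,\, 16n - 32] \times [0,\, 8n - 16]$. Finally, removing the dummy edges and reinterpreting the remaining dummy vertices as bend points moves no point, so the resulting drawing $\Gamma$ lies on a grid of size at most $(16n - 32) \times (8n - 16)$, as claimed.

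The only part that needs care---and hence the step I expect to be the main obstacle---is the last one: one must be certain that the $\pm 1$ offsets used to place the bends (for instance $e_{ac} = (x(a),\, y(d)+1)$ in Case~1, or $e_{ac} = (x_\text{cross} - 1,\, y_\text{cross} - 1)$ in Case~2) really keep each bend strictly inside the surrounding face rather than merely near its boundary. But this is exactly the content of the ``both bend points lie inside~$q$'' (resp.\ ``inside the pentagonal face'') observations already made in the three cases, combined with the fact that after the refinement an integer grid point exists strictly between any two former (even) coordinates. So no new geometric argument is required; the lemma follows by arithmetic once those containment claims are in hand.
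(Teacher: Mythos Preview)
Your proposal is correct and follows essentially the same approach as the paper: compute the Harel--Sardas grid size for~$\hat G$, substitute the bound $\hat n \le 4n-6$ from Lemma~\ref{lem:O(n)=O(n')=O(hat(n))}, double for the refinement, and observe that the reinserted bend and crossing points lie on inner faces and hence do not enlarge the bounding box. The paper's proof is terser on the last point (it simply asserts that bend points are placed on inner faces), whereas you spell out the containment by referring back to the per-case conclusions, but the substance is the same.
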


\begin{proof}
	The shift algorithm places every vertex of the graph $\hat{G} = (\hat{V}, \hat{E})$ onto a grid point of a grid of size $(2 \hat{n} - 4) \times (\hat{n} - 2)$.
	By the upper bound on $\hat{n}$ from Lemma~\ref{lem:O(n)=O(n')=O(hat(n))}, we get the following grid size:
	\begin{align*}
	\text{coarser grid size} &\leq (2 (4 n - 6) - 4) \times ((4 n - 6) - 2) \nonumber \\
	&= (8 n - 16) \times (4 n - 8)
	\end{align*}
	This grid is later refined by a factor of $2$ in both dimensions.
	This bounds the size of the grid as follows:
	\begin{align*}
	\text{total grid size} &\leq (2 (8 n - 16)) \times (2(4 n - 8)) \nonumber \\
	&= (16 n - 32) \times (8 n - 16)
	\end{align*}
	We place bend points onto grid points on inner faces only.
	So, the total size of the drawing and its underlying grid does not increase when we add them.
\end{proof}
The shift algorithm of Harel and Sardas runs in linear
time~\cite{Harel1998}.  Also,
our additional operations can be performed in linear
time~(\cite{Zink2017}, Chapter~3.3.). 
This proves Theorem~\ref{thm:NIC-planar1-bendRACQuadraticArea}.
\paragraph{Full Example}
\begin{figure}
	\centering
	\begin{subfigure}[t]{\linewidth}
		\centering
		\includegraphics[page=1, width= \textwidth] {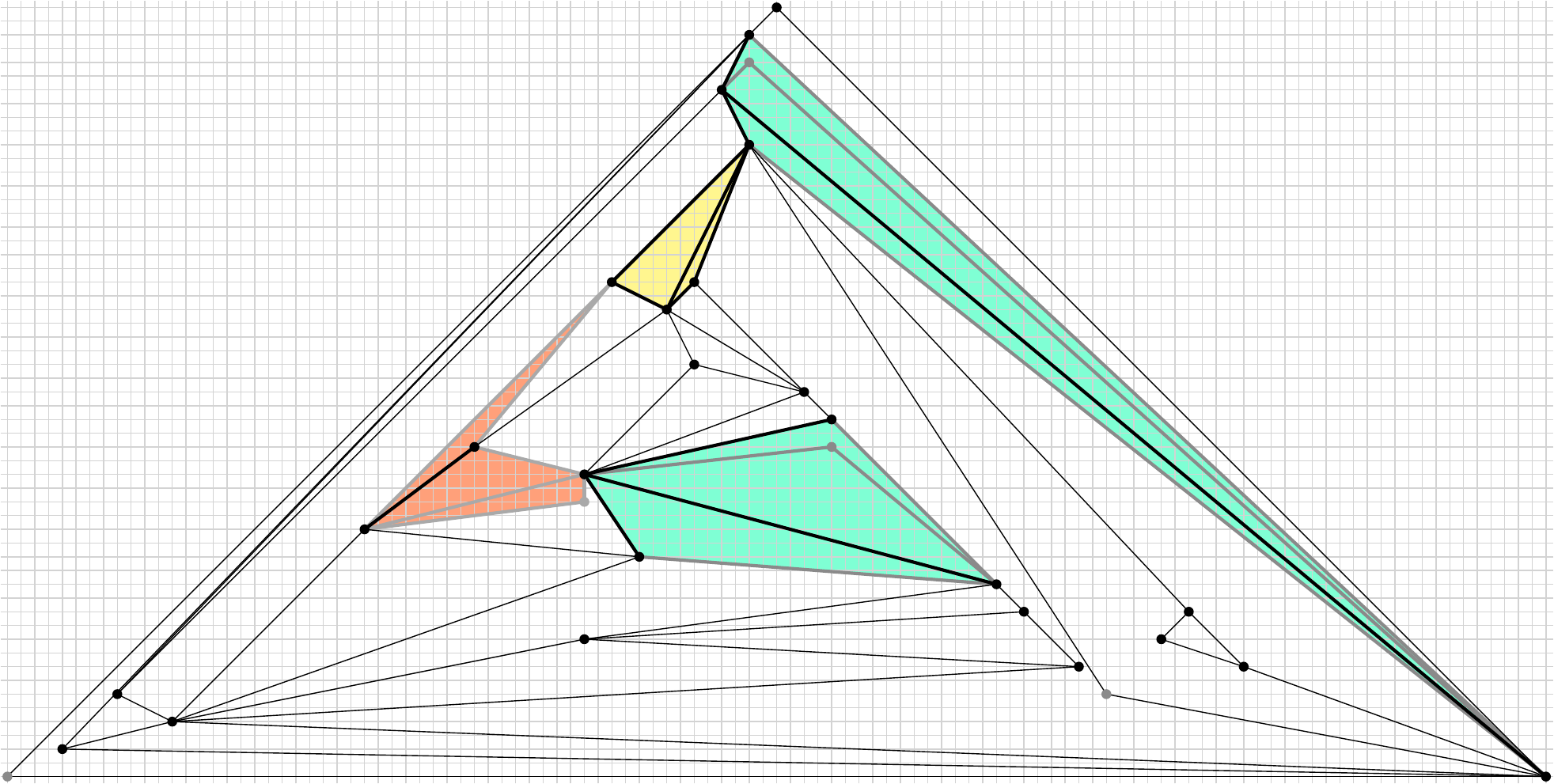}
		\caption{Intermediate drawing, generated by the shift algorithm, before the crossing edges are reinserted.}
		\label{fig:full_drawing-before}
	\end{subfigure}

	\bigskip

	\begin{subfigure}[t]{\linewidth}
		\centering
		\includegraphics[page=2, width= \textwidth] {full_drawing}
		\caption{Final NIC-planar RAC$_1$ drawing after the crossing edges have been reinserted and the dummy edges have been removed.}
		\label{fig:full_drawing-after}
	\end{subfigure}
	\caption{Full example computed by our Java implementation of the algorithm from Section~\ref{sec:NIC-planar1-bendRACQuadraticArea}.}%
	\label{fig:full_drawing}
\end{figure}
We have implemented our algorithm in Java.
The source code is available on GitHub~\cite{Zink2018}.
Figure~\ref{fig:full_drawing} shows a
drawing of a NIC-plane graph produced by this implementation.
The embedded graph in this example has four crossings.
For two of these crossings, Case~2 of our algorithm applies (green background color). 
Case~1 (yellow background color) and
Case~3 (red background color) apply to one crossing each.
In particular, in Fig.~\ref{fig:full_drawing-after}, two pairs of segments with slope $+1$ and $-1$ cross at a right angle and two pairs of horizontal/vertical segments cross.
The drawing in Fig.~\ref{fig:full_drawing-before} shows the graph as it
is drawn by the shift algorithm and before the crossing edges are
inserted.
The divided quadrangles into which we inserted crossing edges in the next step are highlighted by thick edges.
Dummy edges and vertices are drawn in gray.
Note that the two divided quadrangles in Case~2 contain an additional shift vertex and the one in Case~3 has an additional 2-path, which is also highlighted and makes the quadrangle a pentagon with a second chord edge.
The drawing in Fig.~\ref{fig:full_drawing-after} shows the final graph drawing after the crossing edges have been reinserted in the postprocessing step and after the dummy edges and vertices have been removed.
The four pairs of crossing edges are highlighted by thick edges.

\section{1-Planar 2-Bend RAC Drawings in Polynomial Area}
\label{sec:1-planar2-bendRACQuadraticArea}

In this section we constructively show that polynomial area is
sufficient for RAC$_2$ drawings of 1-planar graphs.  We prove the
following.

\begin{theorem}
	\label{thm:1-planar2-bendRACQuadraticArea}
	Any $n$-vertex 1-plane graph $(G, \mathcal{E})$ admits a 1-planar
	RAC$_2$ drawing that respects~$\mathcal{E}$
	and lies on a grid of size $\mathcal{O}(n^3) \times \mathcal{O}(n^3)$.
	The drawing can be computed in~$\mathcal{O}(n)$~time.
\end{theorem}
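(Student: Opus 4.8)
The plan is to follow the architecture of the proof of Theorem~\ref{thm:NIC-planar1-bendRACQuadraticArea}: augment $(G,\mathcal{E})$ to a plane biconnected graph, draw that graph crossing-free with the shift algorithm of Harel and Sardas while reserving space around each former crossing, reinsert the crossing edges at right angles with at most two bends each, and finally bound the grid size.

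First I would isolate the crossings. For a general $1$-plane graph two crossings may share two vertices, so the empty-kite augmentation used in Section~\ref{sec:NIC-planar1-bendRACQuadraticArea} may want to insert the same dummy edge twice; the remedy there (subdividing \emph{the} original parallel edge) no longer applies, since now both copies can be dummy edges. I would therefore first subdivide, for every crossing, each of its two edges near the crossing point, introducing a constant number of fresh degree-$2$ vertices per crossing, so that the crossing now lies between two segments all of whose endpoints are fresh. Building an empty kite on these fresh vertices and then deleting the two crossing segments yields one empty quadrangle per crossing, and since no two quadrangles share a vertex, no parallel dummy edges ever arise. Because a $1$-plane graph has $O(n)$ crossings, the augmented graph still has $O(n)$ vertices; I then biconnect it (e.g.\ with Hopcroft--Tarjan), keeping each empty quadrangle as a face, and call the result $(\hat{G},\hat{\mathcal{E}})$.

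Next I would draw $(\hat{G},\hat{\mathcal{E}})$ with the Harel--Sardas shift algorithm, modifying the computation of the canonical ordering in the same spirit as Cases~1--3 of Section~\ref{sec:NIC-planar1-bendRACQuadraticArea} (inserting a chord of each quadrangle and, in the awkward sub-cases, a shift vertex or an extra subdivision vertex) so that each empty quadrangle is placed convexly with a known shape on an $O(n)\times O(n)$ grid. Into each such quadrangle I would then reinsert its two crossing segments, routing each with one bend so that the two arms meeting at a common crossing point are perpendicular and the crossing lies strictly inside the face; this is the same type of construction as in Section~\ref{sec:NIC-planar1-bendRACQuadraticArea}, and correctness (arms stay inside the quadrangle, meet the incident kite edges without new crossings, and $\mathcal{E}$ is preserved) again follows from convexity and the controlled shape, possibly after one further local subdivision to host a bend. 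Since the segment endpoints are subdivision vertices of the original edges, a short bend count shows that each original edge receives at most two bends in total, so the drawing is RAC$_2$.

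The main obstacle is the grid size, and it is here that cubic rather than quadratic area enters. After the shift phase we have an $O(n)\times O(n)$ grid, but the quadrangle shapes are far less uniform than in the NIC-plane case, so we cannot in general route the two crossing arms with slopes $\pm 1$; placing the reinserted crossing point and its neighbouring bend points at integer coordinates while enforcing a right angle inside a face whose geometry we control only up to $O(1)$ parameters forces points that are rationals with denominators up to $\Theta(n^2)$ (an intersection, or foot of a perpendicular, determined by $O(1)$ grid points of an $O(n)\times O(n)$ grid). Refining the underlying grid by this $\Theta(n^2)$ factor yields the claimed $O(n^3)\times O(n^3)$ grid, and since augmentation, the modified canonical ordering, the shift algorithm, and the reinsertion are each linear, the whole drawing is computed in $O(n)$ time. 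I expect the heaviest part of the argument to be the reinsertion case analysis together with the verification that this refinement really suffices for every quadrangle shape the shift algorithm can produce.
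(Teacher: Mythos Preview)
Your approach diverges from the paper's in a way that leaves a genuine gap, and the gap is precisely in the ``short bend count'' you claim is routine.

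You subdivide each crossing edge on both sides of the crossing so that the crossing segment has four fresh endpoints, build an empty quadrangle on these fresh vertices, and then reinsert the two crossing segments with one bend each as in Section~\ref{sec:NIC-planar1-bendRACQuadraticArea}. But each original crossing edge is now the concatenation \emph{endpoint -- fresh vertex -- (reinserted segment with one bend) -- fresh vertex -- endpoint}. The two fresh subdivision vertices are placed by the shift algorithm as ordinary vertices; there is no reason for the incident pieces to be collinear through them, so they become genuine bend points of the final drawing. Together with the reinsertion bend this gives \emph{three} bends per crossing edge, not two, and your ``one further local subdivision to host a bend'' would only make this worse. The construction, as described, yields a $\text{RAC}_3$ drawing, not $\text{RAC}_2$.

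The paper avoids exactly this overcount by a different isolation step: instead of cutting out the crossing, it \emph{planarizes} it, turning the crossing point into a crossing vertex~$c$, surrounds~$c$ by a subdivided kite, and draws the whole plane graph (including~$c$) with Harel--Sardas. Then each of the four half-edges at~$c$ is redrawn with a single bend so that its last segment lies on one of the four axis-parallel half-lines emanating from~$c$; the right angle at~$c$ is automatic, and since~$c$ is the crossing point (not a bend), each original edge has exactly two bends. The cubic grid then arises from a two-stage refinement governed by a Farey-sequence argument (Lemmas~\ref{lem:2-bend-c-q-contains-grid-points} and~\ref{lem:2-bend-c-q-contains-grid-points-also-for-hierarchie-depth-2}), needed because some half-edges must wait for a neighbouring half-edge to be bent first; your heuristic ``denominators up to~$\Theta(n^2)$'' points in the right direction but is not a proof and does not identify this dependency phenomenon, which is what actually forces the second refinement.
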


The idea of our algorithm is to draw a slightly modified, planarized version
of the 1-plane input graph with a variant of the shift algorithm
(by Harel and Sardas~\cite{Harel1998}) and then
``manually'' redraw the crossing edges so that they cross at right
angles and have at most two bends each.
An example of this redrawing of a crossing is depicted in Figs.~\ref{fig:inserting-bend-point-on-the-grid(c)}--\ref{fig:inserting-bend-point-on-the-grid(h)}.
The difficulty is to find grid points for the bend points and the crossings
so that the redrawn edges do not touch or cross the surrounding edges
drawn by the shift algorithm.
To this end, we refine our grid and place the middle part of each
crossing edge onto a horizontal or vertical grid line so that the edge
crossings are at right angles.

\begin{figure}[tb]
	\begin{subfigure}[t]{0.48 \linewidth}
		\centering
		\includegraphics[page=1] {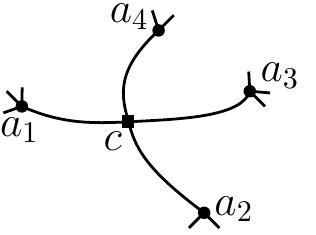}
		\caption{Planarized crossing where the crossing point became a crossing vertex $c$.}
		\label{fig:extendedEmptyKite1}
	\end{subfigure}
	\hfill
	\begin{subfigure}[t]{0.48 \linewidth}
		\centering
		\includegraphics[page=2] {extended_empty_kite}
		\caption{Enclosing the crossing vertex $c$ by a subdivided kite.}
		\label{fig:extendedEmptyKite2}
	\end{subfigure}
	
	\caption{%
		A crossing point is replaced by a crossing vertex $c$ and we insert four 2-paths of two dummy edges and a dummy vertex to induce a subdivided kite at each crossing.
		The vertices $d_1$, $d_2$, $d_3$, and $d_4$ are the dummy vertices of these 2-paths.
	}
	\label{fig:extendedEmptyKite}
\end{figure}

\paragraph{Preprocessing}
Our algorithm gets an $n$-vertex 1-plane graph $(G, \mathcal{E})$ as input.
First, we planarize $G$ by replacing each crossing point by a vertex (see Fig.~\ref{fig:extendedEmptyKite1}).
We will refer to the resulting vertices as \emph{crossing vertices}.
Second, we enclose each crossing vertex by a \emph{subdivided kite}, which is an empty kite where the four boundary edges are
subdivided by a vertex (see Fig.~\ref{fig:extendedEmptyKite2}).
We use subdivided kites instead of empty kites to maintain the embedding and to
avoid adding parallel edges.
Third, we make the graph biconnected using, e.g., the algorithm of Hopcroft and
Tarjan~\cite{Hopcroft1973}.
Note that we do not insert edges into inner faces of subdivided kites because all vertices and edges of a subdivided kite are in the same biconnected component.
After these three steps, we have a biconnected plane graph $(G', \mathcal{E}')$.
We draw $(G', \mathcal{E}')$ using the algorithm of Harel and
Sardas~\cite{Harel1998}.
This algorithm returns a crossing-free straight-line drawing~$\Gamma'$ of~$(G', \mathcal{E}')$, whose vertices lie on a grid of size $(2 n' - 4) \times (n'-2)$, where $n'$ is the number of vertices of~$G'$.

\paragraph{Assignment of Edges to Axis-Parallel Half-Lines}
For each crossing vertex~$c$, there are four incident edges in~$G'$.
They correspond to two edges of~$G$.
Consider the circular order around~$c$ in~$(G', \mathcal{E}')$.
The first and the third edge incident to $c$ correspond to one edge in~$(G,
\mathcal{E})$; symmetrically, the second and fourth incident edge correspond to
one edge.
To obtain a RAC drawing from this, we redraw each of the four edges around~$c$.
Consider an edge~$ac$ from a vertex~$a$ of the subdivided kite to the crossing vertex~$c$.
This edge is then redrawn with a bend point~$b$ that lies on an axis-parallel
line through~$c$.
In order to obtain a right-angle crossing, we bijectively assign the four incident edges
to the four axis-parallel half-lines originating in~$c$.
We call such a mapping an \emph{assignment}.
We do not take an arbitrary assignment, but take care to avoid
extra crossings with edges that are redrawn or previously drawn.
We call an assignment~$A$ \emph{valid} if there is a way to redraw each edge~$e$ with one bend so that the bend point of~$e$ lies on the half-line $A(e)$ and the resulting drawing is plane.

To ensure that our valid assignment can be realized on a small grid, we introduce further criteria.
We say that an edge~$e_1$ \emph{depends} on another edge~$e_2$ with
respect to an assignment~$A$ if $e_2$ lies in the angular sector between~$e_1$ and the
half-line $A(e_1)$.
In Fig.~\ref{fig:assignment_case_1}, for example, the edge~$e_3$
depends on~$e_4$ and $e_2$ depends on~$e_1$, but~$e_1$ and~$e_4$ do
not depend on any edge.  We call edges (such as~$e_1$ and~$e_4$) that
do not depend on other edges \emph{independent}.
We define the \emph{dependency depth} of an
assignment to be the largest integer~$k$ with $0 \le k \le 3$ such
that there is a chain of~$k+1$ edges $e_1,e_2,\dots,e_{k+1}$ incident
to~$c$ such that~$e_1$ depends on~$e_2$ and $\dots$ and~$e_k$ depends
on~$e_{k+1}$, but there is no such chain of $k+2$ edges.
For example, in Figs.~\ref{fig:assignment_case_1},
\ref{fig:assignment_case_2}, and~\ref{fig:assignment_case_3}, the
assignment has a dependency depth of~1, whereas in
Fig.~\ref{fig:assignment_case_4}, the assignment has a dependency
depth of~0.
Showing that there is a valid assignment of dependency depth at most~1 will imply the existence of an appropriate set of grid points for the bend points as formalized in Lemmas~\ref{lem:2-bend-c-q-contains-grid-points} and~\ref{lem:2-bend-c-q-contains-grid-points-also-for-hierarchie-depth-2}.
In fact, as we will see in the discussion below, if we could avoid
dependencies, our drawing would fit on a grid of size
$\mathcal{O}(n^2) \times \mathcal{O}(n^2)$.
Unfortunately, with our current approach this seems to be unavoidable.

We now construct an assignment that we will show in
Lemma~\ref{lem:validAssignment} to be valid and to have dependency
depth at most~1.  The four cases of our assignment are given in
order of priority.  Note that, in Cases~1
and~2, our assignment always contains dependencies; see
Figs.~\ref{fig:assignment_case_1} and~\ref{fig:assignment_case_2}.
Note further that it is enough to specify the assignment of one edge;
the remaining assignment is determined since the circular orders of
the edges and the assigned half-lines must be the~same.

\begin{figure}[t]
	\begin{subfigure}[t]{0.24 \linewidth}
		\centering
		\includegraphics[page=1]{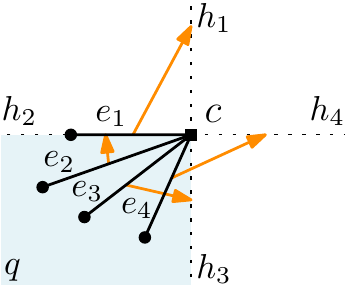}
		\caption{Case~1: $q$ contains four incident edges.}
		\label{fig:assignment_case_1}
	\end{subfigure}
	\hfill
	\begin{subfigure}[t]{0.24 \linewidth}
		\centering
		\includegraphics[page=1]{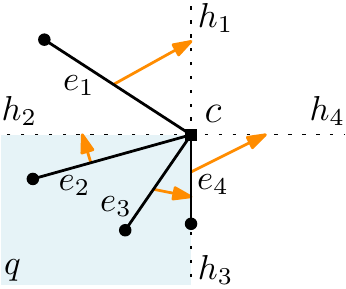}
		\caption{Case~2: $q$ contains three incident edges.}
		\label{fig:assignment_case_2}
	\end{subfigure}
	\hfill
	\begin{subfigure}[t]{0.24 \linewidth}
		\centering
		\includegraphics[page=1]{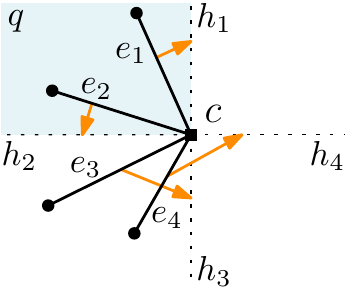}
		\caption{Case~3: $q$ contains two incident edges.}
		\label{fig:assignment_case_3}
	\end{subfigure}
	\hfill
	\begin{subfigure}[t]{0.24 \linewidth}
		\centering
		\includegraphics[page=1]{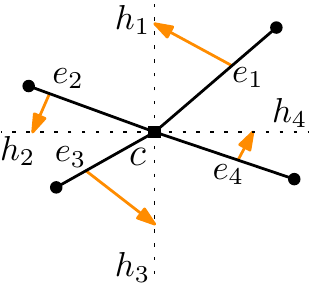}
		\caption{Case~4: One incident edge per quadrant.}
		\label{fig:assignment_case_4}
	\end{subfigure}
	
	\bigskip
	
	\begin{subfigure}[t]{0.24 \linewidth}
		\centering
		\includegraphics[page=2]{assignment_case_1}
		\caption{Case~1.}
		\label{fig:assignment_case_1-validity}
	\end{subfigure}
	\hfill
	\begin{subfigure}[t]{0.24 \linewidth}
		\centering
		\includegraphics[page=2]{assignment_case_2}
		\caption{Case~2.}
		\label{fig:assignment_case_2-validity}
	\end{subfigure}
	\hfill
	\begin{subfigure}[t]{0.24 \linewidth}
		\centering
		\includegraphics[page=2]{assignment_case_3}
		\caption{Case~3.}
		\label{fig:assignment_case_3-validity}
	\end{subfigure}
	\hfill
	\begin{subfigure}[t]{0.24 \linewidth}
		\centering
		\includegraphics[page=2]{assignment_case_4}
		\caption{Case~4.}
		\label{fig:assignment_case_4-validity}
	\end{subfigure}
	
	\caption{The four cases of our assignment procedure: (a)--(d)
		indicate the assignment with orange arrows and show that the
		dependency depth is always at most~1, (e)--(f) show that the
		assignment is valid; the radius of the light blue disk
		is~$\epsilon$.}
\end{figure}

\noindent
\begin{description}
	\item[Case~1:] There is a quadrant~$q$ that contains all four incident edges; see Fig.~\ref{fig:assignment_case_1}.
	
	Take the two ``inner'' edges in~$q$ and assign them to the two
	half-lines that bound~$q$ while keeping the circular order.
	
	\item[Case~2:] There is a quadrant~$q$ that contains three incident edges; see Fig.~\ref{fig:assignment_case_2}.
	
	Consider the edge outside~$q$, say~$e_1$, and assign it to the
	closest half-line~$h_i$ in terms of angular distance that does not bound~$q$.
	
	\item[Case~3:] There is a quadrant~$q$ that contains two incident edges; see Fig.~\ref{fig:assignment_case_3}.
	
	Assign the incident edges in~$q$ to the half-lines that bound~$q$ while keeping the circular order.
	
	\item[Case~4:] Each quadrant contains exactly one incident edge; see Fig.~\ref{fig:assignment_case_4}.
	
	Assign each edge to its closest half-line in counter-clockwise direction.
\end{description}

\newcounter{lemmaCounterValidAssignment}
\setcounter{lemmaCounterValidAssignment}{\thetheorem} 
\newcommand{\lemmaValidAssignment}{%
	Our assignment procedure returns a valid assignment with
	dependency depth at most~1.
}
\begin{lemma}
	\label{lem:validAssignment}
	\lemmaValidAssignment
\end{lemma}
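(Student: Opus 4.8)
The plan is to verify, case by case, the two properties claimed for the assignment produced by our procedure: that it is \emph{valid} (each incident edge can be redrawn with a single bend on its assigned half-line so that the whole drawing stays plane) and that its \emph{dependency depth} is at most~$1$. Since the four cases of the assignment procedure are mutually exclusive and exhaustive (they are distinguished by the maximum number of the four edges incident to~$c$ that lie in a common closed quadrant), it suffices to treat each case separately; within each case the assignment of one edge determines the rest by the circular-order constraint, so there is essentially one configuration to analyze up to rotation and reflection.

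For the dependency-depth bound I would argue as follows. In Case~4 each quadrant contains exactly one incident edge and each edge is sent to its closest half-line in counter-clockwise direction; the angular sector swept between an edge and its image then contains no other incident edge, so every edge is independent and the depth is~$0$ (Fig.~\ref{fig:assignment_case_4}). In Cases~1--3 I would identify explicitly which edges depend on which: when a quadrant~$q$ contains two, three, or four of the incident edges, the two bounding half-lines of~$q$ receive the two extreme edges (in Cases~1 and~3) or the procedure pushes the one ``outside'' edge to the nearest non-bounding half-line (Case~2). In each case one checks that the chain of dependencies has length at most~$2$ edges, i.e.\ no edge~$e_1$ depends on an edge~$e_2$ that in turn depends on a third edge~$e_3$: the edges lying on bounding half-lines of~$q$, or the edge moved out of~$q$ in Case~2, are independent, and every remaining edge depends only on such an independent edge. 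This is exactly what is depicted in Figs.~\ref{fig:assignment_case_1}--\ref{fig:assignment_case_3}, and the argument is a short angular-interval inclusion check.

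For validity I would, in each case, exhibit the required plane redrawing: route each edge~$ac$ from~$a$ as a straight segment toward~$c$ for most of its length and then bend it, at a point within distance~$\epsilon$ of~$c$, onto the assigned half-line~$A(ac)$, where $\epsilon$ is chosen small enough that the bend points all lie inside a small disk around~$c$ that the surrounding drawing avoids (the subdivided kite inserted in the preprocessing guarantees such a disk, cf.\ Fig.~\ref{fig:extendedEmptyKite2}, and the light-blue disk of radius~$\epsilon$ in Figs.~\ref{fig:assignment_case_1-validity}--\ref{fig:assignment_case_4-validity}). Two redrawn edges cross only if the angular sector of one contains part of the other near~$c$; since the circular order of the edges equals the circular order of their assigned half-lines, the only possible conflicts are between an edge and an edge it depends on, and those are resolved by letting the dependent edge's bend point sit slightly closer to~$c$ on its half-line than the independent edge's curve passes—so the dependent edge can ``hug'' the half-line and slip past. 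A redrawn edge also cannot meet a previously drawn (non-kite) edge, because outside the $\epsilon$-disk it coincides with (a subsegment of) the original straight edge, and inside the disk only kite edges and the four redrawn stubs are present.

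The main obstacle I expect is the validity argument in Cases~1 and~2, where the assignment necessarily creates a dependency and the ``inner'' edges must be bent across the directions of the ``outer'' edges without crossing them. Making this rigorous requires being careful about the order in which the bend points are placed along each half-line (outer edges get bends farther from~$c$, inner/dependent edges get bends closer to~$c$) and verifying that with this ordering the resulting one-bend polylines are pairwise non-crossing and non-touching inside the $\epsilon$-disk; the rest of the lemma—Cases~3 and~4, and the separation from the surrounding drawing—is routine once the $\epsilon$-disk from the subdivided kite is fixed.
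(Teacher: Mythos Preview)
Your plan matches the paper's proof closely: the same case split for the dependency depth, the same $\epsilon$-disk around~$c$ to isolate the redrawing from the rest of the picture, and the same device of pushing the bend point of a dependent edge closer to~$c$ than the curve of the edge it depends on. The paper carries this out almost exactly as you sketch, so the overall strategy is sound.

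There is, however, one piece of reasoning that does not hold as stated. You justify that a redrawn edge avoids the surrounding drawing by saying that ``outside the $\epsilon$-disk it coincides with (a subsegment of) the original straight edge.'' It does not: the redrawn edge is the polyline $a\!\to\! b\!\to\! c$ with~$b$ on the half-line $A(ac)$, and unless~$a$ happens to lie on $A(ac)$, the segment~$\overline{ab}$ is a genuinely different line than~$\overline{ac}$, so outside the disk the two segments diverge. The paper replaces this with the correct (and equally short) observation: because~$c$ is an interior vertex of the subdivided kite and each~$a_i$ is a corner of it, there exists $\epsilon>0$ such that for \emph{every} point~$p$ in the disk of radius~$\epsilon$ around~$c$ the entire segment~$\overline{a_ip}$ lies inside the kite. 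That is the fact you need; with it, only crossings among the four redrawn edges remain to be ruled out, and your ordering-by-dependency argument then works.

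One small side correction: in Case~1 the procedure sends the two \emph{inner} edges of~$q$ to the bounding half-lines of~$q$, not the two extreme ones; the dependency pattern you need ($e_2$ depends on~$e_1$, $e_3$ on~$e_4$, with $e_1,e_4$ independent) is the one the paper obtains from that rule.
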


\begin{proof}
	Observe that there is a disk with radius $\epsilon > 0$
	centered at~$c$ such that for every point~$p$ in this disk,
	the four line segments $\overline{a_1 p}, \overline{a_2 p},
	\overline{a_3 p}, \overline{a_4 p}$ do not cross the
	boundary of the subdivided kite.  In particular, by redrawing
	edges with bend points in this disk, we need only to worry
	about crossings among the edges incident to~$c$, not with
	edges of the kite.  To establish the lemma, it suffices to
	consider the four cases of our assignment independently.
	
	In Figs.~\ref{fig:assignment_case_1}--\ref{fig:assignment_case_4}
	the dependency depth is at most~1 in any of the four cases.  
	Note that only in Case~3 other configurations
	regarding the positions of $e_3$ and $e_4$ are possible, for
	example, when $e_3$ and $e_4$ lie in distinct quadrants or
	when~$e_3$ and~$e_4$ lie in the quadrant opposite~$q$.
	These alternate configurations result in all of $e_1$,
	$e_2$, $e_3$ and $e_4$ being independent.  Thus, we conclude
	that the dependency depth is always at most~1.
	
	Now, we place the bend points $b_1, \dots, b_4$ onto $A(e_1), \dots, A(e_4)$, respectively.
	For $i=1,\dots,4$, we determine the distance~$\epsilon_i$
	of~$b_i$ from~$c$, as follows.  If edge~$e_i$ is independent,
	we simply set $\epsilon_i=\epsilon$.
	Otherwise, if $e_i$ depends on~$e_j$, we first place~$b_j$,
	compute the intersection point~$x$ of $\overline{a_jb_j}$ with
	$A(e_i)$, and set $\epsilon_i = \|\overline{xc}\|/2$.  By this simple
	rule and the choice of~$\epsilon$ it is clear that no two
	redrawn edges intersect.  Hence, the assignment is valid.
\end{proof}

Note that Lemma~\ref{lem:validAssignment} already gives us a RAC$_2$
drawing of the input graph, but in order to get a (good) bound on the
grid size of the drawing, we have to place the bend points on a grid
that is as coarse as possible, but still fine enough to provide us
with grid points where we need them: on the half-lines emanating from
the crossing vertices.  This is what the remainder of this section is
about.

\begin{figure}[t]
	\begin{subfigure}[t]{0.33\linewidth}
		\centering
		\includegraphics[page=1] {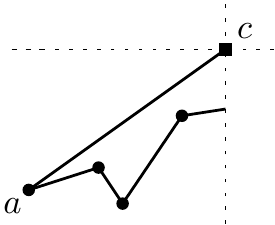}
		\caption{available polygon}
		\label{fig:polygon_of_available_area}
	\end{subfigure}
	\hfill
	\begin{subfigure}[t]{0.65\linewidth}
		\centering
		\includegraphics[page=2] {polygon_and_triangle}
		\caption{triangle for valid edge placement given
			points~$p$ and~$q$}
		\label{fig:valid_triangle}
	\end{subfigure}
	
	\caption{Example of an available polygon in which we determine the points~$p$ and~$q$ and with them the triangle for valid edge placement and the line segment~$\overline{q c}$.}
\end{figure}

\paragraph{Placement of Bend Points on the Grid}
In $\Gamma'$, we have a drawing of a subdivided kite for every crossing in the 1-plane input graph.
It is an octagon with a central crossing vertex~$c$ of degree four in its interior.
For an example, see Fig.~\ref{fig:inserting-bend-point-on-the-grid(c)}.
We will redraw the straight-line edges between~$c$ and its four
adjacent vertices as 1-bend edges according to the assignment~$A$ computed in the previous step.
The segment of such a 1-bend edge~$ac$ that ends at~$c$ will lie on the
axis-parallel half-line~$A(ac)$.
If we pair and concatenate the 1-bend edges that enter~$c$ from
opposite sides, we obtain two 2-bend edges and a right-angle crossing
in~$c$; see Fig.~\ref{fig:inserting-bend-point-on-the-grid(h)}.
It remains to show how the bend points for the edges are placed on the grid.
We proceed as follows. 

\begin{figure}[tb]
	\begin{subfigure}[t]{0.32 \linewidth}
		\centering
		\includegraphics[page=3] {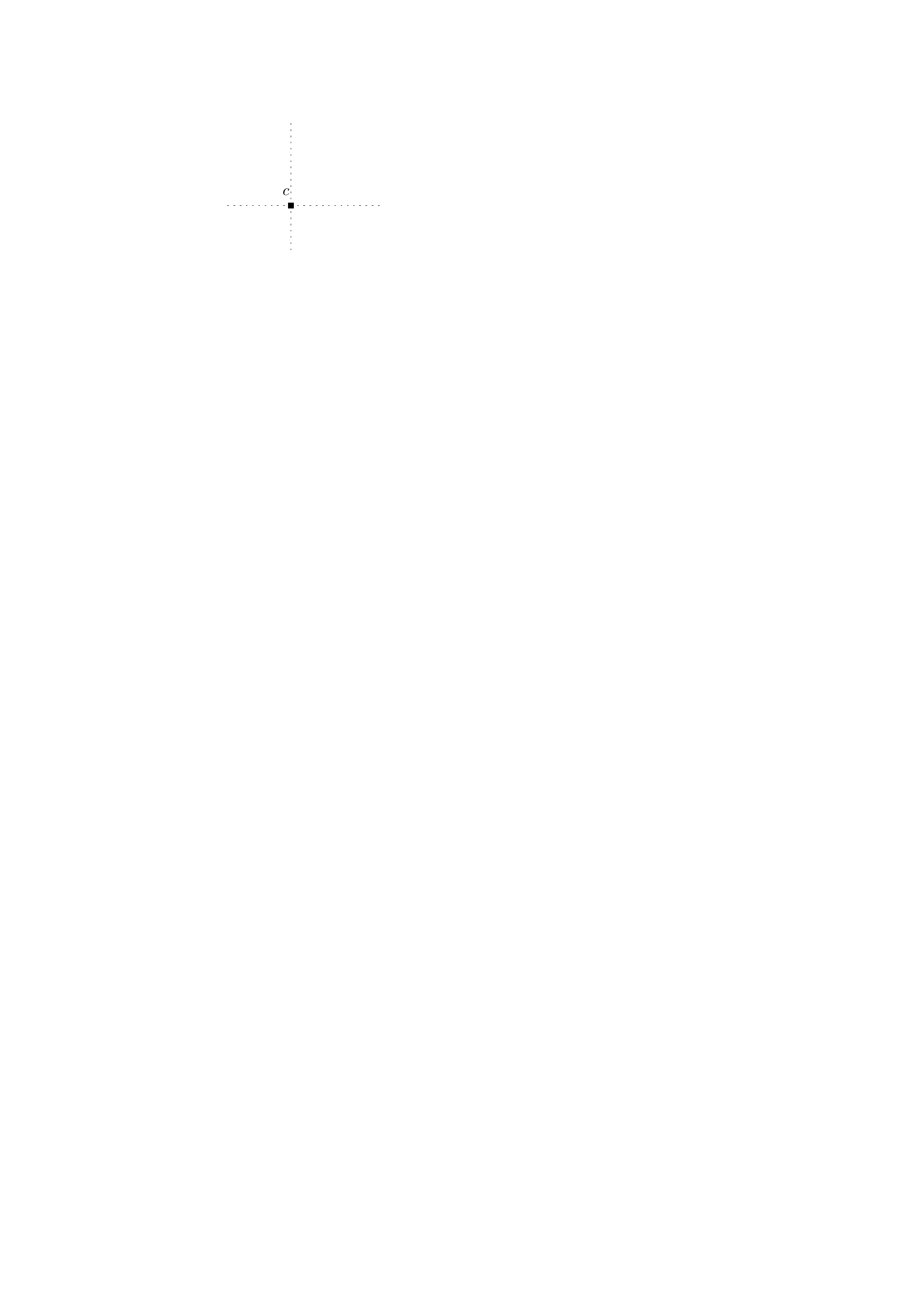}
		\caption{A subdivided kite obtained from the drawing step. The assignment of edges to half-lines is indicated by orange arrows.}
		\label{fig:inserting-bend-point-on-the-grid(c)}
	\end{subfigure}
	\hfill
	\begin{subfigure}[t]{0.32 \linewidth}
		\centering
		\includegraphics[page=4] {insertion_of_bent_edges}
		\caption{Available polygons for each pair of independent edge and assigned half-line.}
		\label{fig:inserting-bend-point-on-the-grid(d)}
	\end{subfigure}
	\hfill
	\begin{subfigure}[t]{0.32 \linewidth}
		\centering
		\includegraphics[page=5] {insertion_of_bent_edges}
		\caption{Triangles for valid edge placement of the three independent edges.}
		\label{fig:inserting-bend-point-on-the-grid(e)}
	\end{subfigure}
	
	\bigskip
	
	\begin{subfigure}[t]{0.32 \linewidth}
		\centering
		\includegraphics[page=6] {insertion_of_bent_edges}
		\caption{After the insertion of the bend points of the three independent edges.}
		\label{fig:inserting-bend-point-on-the-grid(f)}
	\end{subfigure}
	\hfill
	\begin{subfigure}[t]{0.32 \linewidth}
		\centering
		\includegraphics[page=7] {insertion_of_bent_edges}
		\caption{Available polygon and triangle for valid edge placement for the edge~$a_2c$ which depends on~$a_1c$.}
		\label{fig:inserting-bend-point-on-the-grid(g)}
	\end{subfigure}
	\hfill
	\begin{subfigure}[t]{0.32 \linewidth}
		\centering
		\includegraphics[page=8] {insertion_of_bent_edges}
		\caption{Result after the insertion of the bend point~$b_2$.}
		\label{fig:inserting-bend-point-on-the-grid(h)}
	\end{subfigure}
	
	\caption{Transformation from 
		a planarized crossing to a RAC$_2$ crossing.}
\end{figure}

First, we determine for each independent edge~$ac$ incident to a crossing vertex~$c$ the available region into which we can redraw~$ac$ with a bend~$b$ on~$A(ac)$.
The region between~$\overline{ac}$ and the half-line~$A(ac)$ inside the subdivided kite defines an \emph{available polygon}.
Examples of such an available polygon 
are given in
Figs.~\ref{fig:polygon_of_available_area}
and~\ref{fig:inserting-bend-point-on-the-grid(d)}.
Observe that there is only a triangle inside each available polygon in which the new line segment~$\overline{a b}$ can be placed.
Such a \emph{triangle for valid edge placement} is determined by~$a$,~$c$ and a corner point~$p$ of the available polygon.
The point~$p$ is the corner point (excluding~$a$ and~$c$) for which the angle between $\overline{a c}$ and $\overline{a p}$ inside the available polygon is the smallest.
These triangles for valid edge placement are depicted in
Figs.~\ref{fig:valid_triangle}
and~\ref{fig:inserting-bend-point-on-the-grid(e)}.
Observe that in such a triangle, the angle at~$a$ cannot become arbitrarily small because every determining point lies on a grid point.
Let~$q$ be the intersection point of the line through~$\overline{a p}$ and the half-line~$A(ac)$.
One can see~$q$ as the projection of~$p$ onto~$A(ac)$ seen from~$a$.
Note that we have a degenerate case if~$a \in A(ac)$.
Then, the available polygon has no area and equals the line segment~$\overline{a c}$.
In this case let $a = p = q$.
Moreover, note that $p$ can be equal to~$q$ because the intersection of $A(ac)$ and an edge of the subdivided kite is also a corner point of the available polygon.
This is the only case where $p$ may not be a grid point.

We will place the bend point~$b$ onto the line segment~$\overline{q c}$.
Observe that for a triangle for valid edge placement of an edge~$e_1$ that depends on another edge~$e_2$ in~$A$, $\| \overline{q c} \|$ would initially be equal to 0 since $q = c$ then.
For this reason, we first redraw the independent edges, which gives us some space for the edges depending on them, then compute the available polygons and the triangles for
valid edge placement for the other edges, and finally redraw those
edges.  Remember that our assignment procedure returns only
assignments with dependency depth at most~1.  Let~$\Gamma'$ be drawn
on a grid of size $\tilde{n} \times \tilde{n}$.  We refine the grid by
a factor of~$\tilde{n}$ in each dimension.  The next step in our
algorithm relies on the following lemma.  Recall
that we denote the x- and y-coordinate of a point~$p$ by~$x(p)$
and~$y(p)$, respectively.

An important tool in our analysis will be
the so-called \emph{Farey sequence}~\cite{wiki:Farey} of order
$\tilde{n}-1$.
This is the ordered set of all fractions~$f_1, f_2, \dots$, such that $f_i = \frac{a_i}{b_i}$, $a_i, b_i \in \mathbb{N}$, $a_i \leq b_i \leq \tilde{n}-1$, the greatest common divisor of $a_i$ and $b_i$ is~1, and $\frac{a_i}{b_i} < \frac{a_j}{b_j}$ for $i < j$.
In other words, it is the sequence of all reduced fractions from~0 to~1 with numerator and denominator less or equal to~$\tilde{n}-1$.
A nice property of neighboring numbers $f_i = \frac{a}{b}$
and $f_{i+1} = \frac{c}{d}$ in a Farey sequence is that
\begin{equation}
\label{eq:proof_grid_point_on_segment_4}
\frac{c}{d} - \frac{a}{b} = \frac{1}{b d}.
\end{equation}

\newcounter{lemmaCounterGridPointOnCQFirstRefinement}
\setcounter{lemmaCounterGridPointOnCQFirstRefinement}{\thetheorem} 
\newcommand{\lemmaGridPointOnCQFirstRefinement}{%
	For any independent edge~$ac$, the
	interior of the line segment~$\overline{q c}$ contains at least one grid point of the refined $\tilde{n}^2 \times \tilde{n}^2$ grid.
}
\begin{lemma}
	\label{lem:2-bend-c-q-contains-grid-points}
	\lemmaGridPointOnCQFirstRefinement
\end{lemma}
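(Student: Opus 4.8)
The plan is to show that the segment $\overline{qc}$, which lies on an axis-parallel half-line emanating from the crossing vertex $c$, is long enough and positioned so that a grid point of the $\tilde n^2 \times \tilde n^2$ refinement must fall into its interior. First I would set up coordinates: after the first refinement by a factor of $\tilde n$, all vertices of $\Gamma'$ (in particular $a$, $c$, and the kite corners) live on the $\tilde n^2 \times \tilde n^2$ grid, and $A(ac)$ is one of the four axis-parallel half-lines at $c$; without loss of generality assume it is horizontal, so that $q$ and $c$ share a $y$-coordinate and $\overline{qc}$ is a horizontal segment whose endpoints have coordinates that are multiples of $\tilde n$ (since $c$ is an original vertex of $\Gamma'$ and $q$ is the projection of the corner point $p$ onto $A(ac)$ seen from $a$). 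The two things to pin down are: (i) a lower bound on $\|\overline{qc}\|$, and (ii) why $q$ itself sits on a grid line fine enough that the interval between $q$ and $c$ cannot ``skip over'' all refined grid points.

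Next I would exploit the fact that all of $a$, $c$, and $p$ lie on the coarse grid of size $\tilde n \times \tilde n$ before refinement, so the slope of the line $\overline{ap}$ is a rational number with numerator and denominator bounded by $\tilde n - 1$ in absolute value — this is exactly where the Farey sequence of order $\tilde n - 1$ enters. The point $q$ is the intersection of the line through $\overline{ap}$ with the horizontal half-line $A(ac)$ through $c$. Writing out this intersection, the displacement of $q$ from the vertical line through $a$ is $(\,y(c) - y(a)\,)\cdot \frac{1}{m}$ where $m$ is the slope of $\overline{ap}$ (in the coarse grid). The angle at $a$ in the triangle for valid edge placement cannot be arbitrarily small precisely because $p$ is a grid point: the tangent of that angle is the difference of two consecutive Farey fractions, hence by Equation~\eqref{eq:proof_grid_point_on_segment_4} it is at least $\frac{1}{bd} \ge \frac{1}{(\tilde n - 1)^2}$. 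Combining this angular lower bound with the fact that $|y(c) - y(a)| \ge 1$ in the coarse grid (the edge $ac$ is not horizontal, since $a \notin A(ac)$ in the non-degenerate case), I get that the horizontal extent $\|\overline{qc}\|$ in the coarse grid is at least something like $\frac{1}{(\tilde n - 1)^2}\cdot(\text{height of }\overline{ap})$, which after the $\tilde n$-refinement becomes a length of at least roughly $\tilde n/(\tilde n-1)^2$ — but that alone is too weak, so the real argument must be finer.

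The cleaner route, which I expect to be the actual heart of the proof, is to argue directly about grid points rather than about length. Since $c$ is on the coarse grid and $\overline{qc}$ is horizontal, every point of $A(ac)$ at coarse-grid-integer $x$-coordinate is a refined grid point. So it suffices to show that the open segment $\overline{qc}$ contains a point whose $x$-coordinate is an integer in the \emph{coarse} grid, i.e.\ a multiple of $\tilde n$ in the refined grid. Equivalently, $q$ and $c$ must be separated by at least one coarse-grid vertical line. Here I would use that $q$ lies strictly between $a$ and $c$ horizontally (by the definition of $p$ as the corner minimizing the angle, $q$ is on the far side), so $x(q)$ lies strictly between $x(a)$ and $x(c)$; and $x(q) = x(a) + (y(c)-y(a))/m$ with $m$ a Farey-bounded slope. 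The case analysis splits on whether $|x(c) - x(a)| \ge 2$ in the coarse grid — in which case a whole coarse column lies between — versus $|x(c)-x(a)| = 1$, the tight case, where I must show $q$ does not land exactly on the coarse column through $a$; that follows because $q = a$ only in the explicitly excluded degenerate case, and otherwise $y(c) \ne y(a)$ forces $x(q) \ne x(a)$, and then the $\tilde n$-refinement guarantees a refined grid point strictly between $x(q)$ and $x(c)=x(a)\pm 1$ since the gap $|x(c)-x(q)|$ is a positive rational with denominator at most $(\tilde n-1)$ and hence at least $\tfrac{1}{\tilde n - 1} > \tfrac{1}{\tilde n}$ in coarse units, i.e.\ more than one refined grid step. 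The main obstacle is handling this tight case $|x(c)-x(a)|=1$ rigorously and confirming that $q$ cannot be so close to $c$ that no refined grid point fits strictly between them; the Farey bound $\tfrac{1}{bd}$ from Equation~\eqref{eq:proof_grid_point_on_segment_4} is exactly the tool that rules this out, because it lower-bounds the horizontal distance $|x(q)-x(c)|$ by more than one unit of the refined grid.
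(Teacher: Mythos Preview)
Your final sentence lands on the right idea---if $a$, $p$, $c$ are coarse grid points then $q$ has a coordinate whose denominator divides $|y(p)-y(a)| < \tilde n$ (in your horizontal setup), so $\|\overline{qc}\|$ is a nonzero rational with denominator below $\tilde n$, hence exceeds $1/\tilde n$, and the refinement produces an interior grid point. That is essentially the paper's conclusion. But the route you take to get there has two genuine problems.

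First, your geometric picture of $q$ is off. You assert that ``$q$ lies strictly between $a$ and $c$ horizontally'' and then split on whether $|x(c)-x(a)|\ge 2$ or $=1$. Neither claim is relevant: $q$ sits on the half-line $A(ac)$ emanating from $c$, and its distance to $c$ is governed by the \emph{angular} gap between $\overline{ac}$ and $\overline{ap}$, not by the horizontal gap between $a$ and $c$. Even with $|x(c)-x(a)|$ large, $q$ can be arbitrarily close to $c$ in principle; conversely, the ``tight case'' $|x(c)-x(a)|=1$ is not tight at all. The paper instead computes $\|\overline{qc}\|$ directly: writing (in their vertical setup, $a=(0,0)$) $h = x(c)\bigl(\tfrac{y(p)}{x(p)}-\tfrac{y(c)}{x(c)}\bigr)$ and applying the Farey identity to the two slopes, the $x(c)$ cancels and one gets $h=1/x(p)>1/\tilde n$. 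Your first paragraph actually started down this road but stopped at the raw Farey bound $1/(bd)\ge 1/(\tilde n-1)^2$ without noticing the cancellation; that cancellation is the whole point, and it is why the answer is $1/\tilde n$ rather than $1/\tilde n^2$.

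Second, you never treat the case where $p$ is \emph{not} a coarse grid point. By definition $p$ is a corner of the available polygon, and one such corner is the intersection of $A(ac)$ with a boundary edge of the subdivided kite; that intersection need not be integral. The paper handles this separately (its Case~B): when $p=q$ is such an intersection, one replaces $(a,p)$ by the two kite endpoints $(p'_{\mathrm{in}},p'_{\mathrm{out}})$ of the crossed boundary edge, which \emph{are} grid points, and reruns the same Farey computation with them. Without this case your argument is incomplete.
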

\begin{figure}[tb]
	\begin{subfigure}[t]{0.32 \linewidth}
		\centering
		\includegraphics{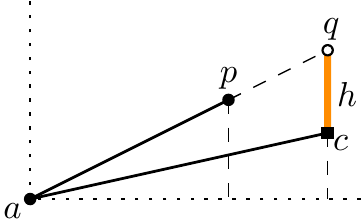}
		\caption{Case A1.}
		\label{fig:proof_grid_point_case_1}
	\end{subfigure}
	\hfill
	\begin{subfigure}[t]{0.32 \linewidth}
		\centering
		\includegraphics{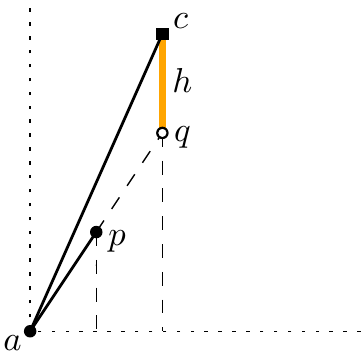}
		\caption{Case A4.}
		\label{fig:proof_grid_point_case_3}
	\end{subfigure}
	\hfill
	\begin{subfigure}[t]{0.32 \linewidth}
		\centering
		\includegraphics{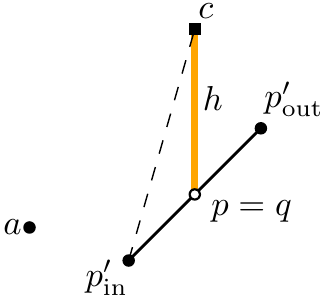}
		\caption{Case B.}
		\label{fig:proof_grid_point_case_b}
	\end{subfigure}
	\caption{Different cases concerning the analysis of
		$h=\|\overline{qc}\|$ in the proof of
		Lemma~\ref{lem:2-bend-c-q-contains-grid-points}.}
\end{figure}

\begin{proof}
	Without loss of generality, we can assume that
	$\overline{q c}$ is vertical. 
	If $x(a)=x(q)=x(c)$, we have the degenerate case
	$q = a$.
	We do not need to bend the edge $ac$ in our algorithm, but, for the completeness of the proof, we can easily see that there are at least $\tilde{n}-1$ grid points on the refined $\tilde{n}^2 \times \tilde{n}^2$ grid because $c$ and $q = a$ are grid points of the coarser $\tilde{n} \times \tilde{n}$ grid.
	
	So, without loss of generality, we can assume that
	$x(a) < x(q) = x(c)$, because mirroring the drawing
	with respect to the line through~$\overline{q c}$ does
	not change the structure of the drawing.
	We can also assume that $a=(0,0)$.
	Again, without loss of generality, we can assume that $y(c)
	\geq 0$. 
	If $y(c) = 0$, we can furthermore assume $y(q) > 0$ (both by the argument of mirroring across the $x$-axis).
	If $y(c) > 0$ and $y(q) < 0$, we are fine because $c$ and $(x(c), 0)$ are both  grid points of the coarser grid.
	Between them, there is more than one grid point of the finer grid.
	So we continue with $y(c) \geq 0$ and $y(q) \geq 0$.
	
	For convenience, we will work with coordinates on the coarser $\mathcal{O}(\tilde{n}) \times \mathcal{O}(\tilde{n})$ grid in the following case distinction.
	Moreover, observe that~$c$ does not lie on the top- or
	bottommost row or on the left- or rightmost column of
	the grid since~$c$ is enclosed by the dummy edges of a
	divided quadrangle.
	Therefore, we know that the difference in the $x$- and
	in the $y$-coordinate of~$c$ and any other vertex of the
	drawing is less than~$\tilde{n}$.
	In particular, we know that
	\begin{equation*}
	0 = x(a) < x(p) \leq  x(q) = x(c) < \tilde{n}.
	\end{equation*}
	
	Now, we distinguish two cases.
	
	\begin{description}
		\item[Case~A:] The point $p$ is a grid point.
		
		The points $a$, $p$ and $q$ are collinear.
		For $x(p) \geq y(p)$ and $x(c) \geq y(c)$, the slopes of $\overline{a p}$ and $\overline{a c}$ are values of the Farey sequence of order $\tilde{n}-1$.
		The slopes are ${y(p)}/{x(p)}$ and ${y(c)}/{x(c)}$.
		One can imagine all these possible slopes going out from~$a$ as rays.
		Without loss of generality, we can assume that the reduced fractions of ${y(p)}/{x(p)}$ and ${y(c)}/{x(c)}$ (or their reciprocals) are neighbored fractions in the Farey sequence and neighbored rays in the picture of the rays going out from~$a$.
		We also assume that ${y(p)}/{x(p)}$ and ${y(c)}/{x(c)}$ are reduced fractions because for a multiple of one of the Farey numbers, the line segment $\overline{q c}$ could only be longer and have more grid points of the finer grid on it but not fewer.
		
		We distinguish the following four subcases.
		
		\begin{description}
			\item[Case~A1:] $y(q) \geq y(c)$, and
			$\frac{y(p)}{x(p)}$ and $\frac{y(c)}{x(c)}$ are
			neighbors in the Farey sequence (see
			Fig.~\ref{fig:proof_grid_point_case_1}).
			
			We have
			\begin{equation}
			\label{eq:proof_grid_point_on_segment_1}
			h = \| \overline{q c} \| = y(q) - y(c)
			\end{equation}
			and
			\begin{equation*}
			y(q) = \frac{y(p)}{x(p)} \cdot x(c).
			\end{equation*}
			Putting this together, we get
			\begin{equation}
			\label{eq:proof_grid_point_on_segment_3}
			h = \frac{y(p)}{x(p)} \cdot x(c) - y(c) = x(c) \cdot
			\left( \frac{y(p)}{x(p)} - \frac{y(c)}{x(c)} \right).
			\end{equation}
			Due to $y(q) \geq y(c)$,
			we know that $\frac{y(p)}{x(p)} > \frac{y(c)}{x(c)}$.
			Using this and Equation~\ref{eq:proof_grid_point_on_segment_4} leads to
			\begin{equation}
			\label{eq:proof_grid_point_on_segment_5}
			h = x(c) \cdot \left( \frac{y(p)}{x(p)} - \frac{y(c)}{x(c)} \right) = x(c) \cdot \frac{1}{x(c) \cdot x(p)} = \frac{1}{x(p)} > \frac{1}{\tilde{n}}.
			\end{equation}
			
			\item[Case~A2:] $y(q) \leq y(c)$, and $\frac{y(p)}{x(p)}$ and $\frac{y(c)}{x(c)}$ are neighbors in the Farey sequence.
			
			This is almost the same as Case~A1,
			only multiplied with $-1$ because now
			we have $\frac{y(p)}{x(p)} < \frac{y(c)}{x(c)}$.
			Indeed, we have
			\begin{align*}
			h &= y(c) - y(q) \\
			&= y(c) - \frac{y(p)}{x(p)} \cdot x(c) \\
			&= x(c) \cdot \left( \frac{y(c)}{x(c)} - \frac{y(p)}{x(p)} \right) \\
			&= x(c) \cdot \frac{1}{x(c) \cdot x(p)} = \frac{1}{x(p)} > \frac{1}{\tilde{n}}.
			\end{align*}
			
			\item[Case~A3:] $y(q) \geq y(c)$, and $\frac{y(p)}{x(p)}$ and $\frac{y(c)}{x(c)}$ are not numbers of the Farey sequence because their numerator is greater than their denominator, but they can be seen as part of
			an extension of the Farey sequence
			from $1$ to $+\infty$. 
			Their reciprocals are neighbors in the Farey sequence.
			
			This case is also similar to A1.
			Equations~\ref{eq:proof_grid_point_on_segment_1}
			and~\ref{eq:proof_grid_point_on_segment_3}
			still hold, but we need to be careful with
			Equation~\ref{eq:proof_grid_point_on_segment_5}
			because $\frac{y(p)}{x(p)}$ and
			$\frac{y(c)}{x(c)}$ are not numbers of
			the Farey sequence. 
			An implication of
			Equation~\ref{eq:proof_grid_point_on_segment_4}
			is that 
			\begin{equation*}
			\frac{c}{d} - \frac{a}{b} =
			\frac{bc-ad}{bd} = \frac{1}{b d},
			\text{ which implies } bc-ad = 1.
			\end{equation*}
			Plugging in the Farey numbers $\frac{x(p)}{y(p)}$ and
			$\frac{x(c)}{y(c)}$ with $\frac{x(p)}{y(p)} <
			\frac{x(c)}{y(c)}$, we get
			\begin{equation*}
			y(p) \cdot x(c) - x(p) \cdot y(c) = 1 .
			\end{equation*}
			Using this, we transform
			Equation~\ref{eq:proof_grid_point_on_segment_3}, which
			yields the desired lower bound on~$h$:
			\begin{align*}
			h &= x(c) \cdot \left( \frac{y(p)}{x(p)} - \frac{y(c)}{x(c)} \right) \\
			&= x(c) \cdot \frac{y(p) \cdot x(c) - x(p) \cdot y(c)}{x(c) \cdot x(p)} \\
			&= x(c) \cdot \frac{1}{x(c) \cdot x(p)}
			= \frac{1}{x(p)} > \frac{1}{\tilde{n}}
			\end{align*}
			
			\item[Case~A4:] $y(q) \leq y(c)$, and $\frac{y(p)}{x(p)}$ and $\frac{y(c)}{x(c)}$ are not numbers of the Farey sequence because their numerator is greater than their denominator, but they can be seen as part of
			an extension of the Farey sequence
			from $1$ to $+\infty$. Their reciprocals are neighbors in the Farey sequence. A sketch is given in Fig.~\ref{fig:proof_grid_point_case_3}.
			
			This case is analogous to Case~A3 in
			the same way as Case~A2 is analogous
			to Case~A1.  Again, we can multiply
			with~$-1$ or alternatively swap all
			occurrences of~$p$ and~$c$.
		\end{description}
		
		\item[Case~B:] The point~$p$ is not a grid point.
		
		This situation may only occur if $p = q$.
		In this case the point~$p$ in the
		available polygon
		is the intersection of the
		assigned axis-parallel half-line and an
		edge~$e$ of the subdivided kite.
		We name the endpoint
		of~$e$ that is inside the available
		polygon~$p'_\mathrm{in}$ and the endpoint
		that is outside~$p'_\mathrm{out}$ (see
		Fig.~\ref{fig:proof_grid_point_case_b}).
		Clearly, we have a similar situation as in
		Case~A.  Here, $p'_\mathrm{in}$ is in the
		position of~$a$ in Case~A and
		$p'_\mathrm{out}$ is in a similar position
		as~$p$ in Case~A.  The points~$p'_\mathrm{in}$
		and~$p'_\mathrm{out}$ are vertices of~$G'$
		and, thus, grid points of the $\tilde{n}
		\times \tilde{n}$ grid.  The only difference
		is the order of the points $a$, $p$, $q$ and
		$p'_\mathrm{in}$, $q$, $p'_\mathrm{out}$ on
		each common line.  Observe that the formulas
		given in Case~A still hold if $q$ lies
		between~$p'_\mathrm{in}$
		and~$p'_\mathrm{out}$ instead of lying to
		the right of both.  Therefore, by doing the
		same analysis as in Case~A with exchanged
		roles of~$a$ and~$p$, we get the same result.
	\end{description}
	To summarize, for both cases and each subcase, we have
	seen that $\|\overline{q c}\| > {1}/{\tilde{n}}$.  By
	refining the $\tilde{n} \times \tilde{n}$ grid by a
	factor of $\tilde{n}$ in each dimension, we get a
	$\tilde{n}^2 \times \tilde{n}^2$ grid where each grid
	point of the coarser grid is also a grid point of the
	finer grid.  The crossing point~$c$ is a grid point of
	both grids.  On each of the four axis-parallel
	half-lines emanating from~$c$, we reach the next grid
	point after a distance of ${1}/{\tilde{n}}$.  Given
	that $\|\overline{q c}\| > {1}/{\tilde{n}}$, the
	interior of the line segment $\overline{q c}$ contains
	at least one grid point.
\end{proof}
Using Lemma~\ref{lem:2-bend-c-q-contains-grid-points}, we pick for
each independent edge any grid point of $\overline{q c}$, place a bend
point $b$ on it, and replace the segment $\overline{a c}$ by the two
segments $\overline{a b}$ and $\overline{b c}$.
An example is given in Fig.~\ref{fig:inserting-bend-point-on-the-grid(f)}, where the edges $a_1c$, $a_3c$, and $a_4c$ are independent, but $a_2c$ depends on~$a_1c$.

We again refine the grid by a factor of $\tilde{n}$ in each dimension. The grid size is now $\tilde{n}^3 \times \tilde{n}^3$.
For the remaining edges incident to a crossing vertex~$c$, we compute new available polygons and triangles for valid edge placement since we need to take the 1-bend edges into account that were inserted in the previous step.
Now the following lemma yields grid points for the bend points of the remaining edges.
\newcounter{lemmaCounterGridPointOnCQSecondRefinement}
\setcounter{lemmaCounterGridPointOnCQSecondRefinement}{\thetheorem} 
\newcommand{\lemmaGridPointOnCQSecondRefinement}{%
	After having redrawn the independent edges, the interior of
	the line segment~$\overline{q c}$ of each edge~$ac$ depending on an independent edge~$\hat{a}c$ contains at least one grid
	point of the refined $\tilde{n}^3 \times \tilde{n}^3$
	grid.
}
\begin{lemma}
	\label{lem:2-bend-c-q-contains-grid-points-also-for-hierarchie-depth-2}
	\lemmaGridPointOnCQSecondRefinement
\end{lemma}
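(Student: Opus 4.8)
The plan is to re-run the argument behind the proof of Lemma~\ref{lem:2-bend-c-q-contains-grid-points}, but with the just-redrawn independent edge playing the role that the subdivided kite played there. First I would use Lemma~\ref{lem:validAssignment}: since the assignment has dependency depth at most~$1$, the edge~$ac$ depends on exactly one independent edge, which we call~$\hat a c$; by the previous step this edge has already been redrawn as a $1$-bend edge~$\hat a\,\hat b\,c$ whose bend point~$\hat b$ sits on a grid point of the $\tilde n^2\times\tilde n^2$ grid and whose segment~$\overline{\hat b c}$ lies on the axis-parallel half-line~$A(\hat a c)$. Recomputing the available polygon and the triangle for valid edge placement of~$ac$ against this new drawing (cf.\ Fig.~\ref{fig:inserting-bend-point-on-the-grid(g)}), the corner point~$p$ that determines the triangle is now the bend point~$\hat b$, or the intersection of the half-line~$A(ac)$ with one of the two segments~$\overline{\hat a\,\hat b}$,~$\overline{\hat b\,c}$; in every case~$p$ lies on a segment whose two endpoints are grid points of the $\tilde n^2\times\tilde n^2$ grid, while the ``reference'' endpoint of that segment that plays the role of~$a$ (namely~$a$ itself, or~$\hat a$, or a subdivided-kite vertex) is in fact a grid point of the coarse $\tilde n\times\tilde n$ grid, and so is~$c$.

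Then I would redo the case distinction of the proof of Lemma~\ref{lem:2-bend-c-q-contains-grid-points} essentially verbatim, now with~$\tilde n^2$ in the role of~$\tilde n$ and with the segments of the redrawn edge~$\hat a\,\hat b\,c$ in the role of the edges of the subdivided kite; the analogue of ``Case~B'' (where~$p$ is not a grid point) is exactly the situation in which~$p$ is the intersection of~$A(ac)$ with~$\overline{\hat a\,\hat b}$ or~$\overline{\hat b\,c}$, and is handled with the two endpoints of that segment in the roles of~$p'_\mathrm{in}$ and~$p'_\mathrm{out}$. The point where this gives more than a black-box reuse of Lemma~\ref{lem:2-bend-c-q-contains-grid-points} is the following: after normalising so that~$\overline{q c}$ is vertical and the reference vertex is the origin, the integer combination~$|y(p)x(c)-y(c)x(p)|$ that drives the Farey estimate coming from Equation~\ref{eq:proof_grid_point_on_segment_4} contains a factor~$\tilde n$, because~$c$ and the reference vertex both lie on the coarse grid, so the coordinates of~$c$ relative to the reference vertex are multiples of~$\tilde n$ once we measure in units of the $\tilde n^2\times\tilde n^2$ grid; hence a factor~$\tilde n$ factors out and leaves a nonzero integer. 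The estimate therefore improves by a factor~$\tilde n$ over the naive bound and yields~$\|\overline{q c}\|>1/\tilde n^2$ in the units of~$\Gamma'$. Refining once more by a factor of~$\tilde n$ gives the $\tilde n^3\times\tilde n^3$ grid, whose grid points along an axis-parallel half-line emanating from~$c$ are spaced~$1/\tilde n^2$ apart; since~$c$ is a grid point and~$\|\overline{q c}\|>1/\tilde n^2$, the interior of~$\overline{q c}$ contains at least one such grid point, which is then the desired bend point (cf.\ Fig.~\ref{fig:inserting-bend-point-on-the-grid(h)}).

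The main obstacle I expect is precisely this factor-$\tilde n$ gain: a literal reuse of Lemma~\ref{lem:2-bend-c-q-contains-grid-points} for the $\tilde n^2\times\tilde n^2$ grid is off by a factor~$\tilde n$ and would not survive the single remaining refinement, so one has to re-open its proof and argue carefully that~$c$ and the relevant reference vertex are still coarse-grid points, and that the axis-parallel placement of the middle segment~$\overline{\hat b c}$ keeps the geometry---the location of~$p$ and~$q$, and all degenerate subcases---under control. One also has to re-verify the boundary argument from the earlier proof, namely that~$c$ does not lie on the extreme rows or columns of the grid, so that all relevant coordinate differences are now bounded by~$\tilde n^2$ rather than~$\tilde n$, and hence the neighbouring fractions of the Farey sequence of order~$\tilde n^2-1$ are indeed available at the places where the computation needs them.
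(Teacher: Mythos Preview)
Your approach is correct but takes a different route from the paper's. You propose to rerun the full Farey-style case analysis of Lemma~\ref{lem:2-bend-c-q-contains-grid-points} with the redrawn edge $\hat a\,\hat b\,c$ playing the role of a kite edge, and to recover the missing factor~$\tilde n$ by observing that, relative to any coarse-grid reference vertex, the coordinates of~$c$ are multiples of~$\tilde n$, so the determinant $|y(p)x(c)-y(c)x(p)|$ is a nonzero multiple of~$\tilde n$ and hence at least~$\tilde n$, while the denominator $|x(p)|$ is at most~$\tilde n^2$. That does give $h>1/\tilde n$ in $\tilde n^2$-grid units. One caveat: you should not literally carry over the ``WLOG Farey neighbours'' reduction from Lemma~\ref{lem:2-bend-c-q-contains-grid-points}, since reducing $y(c)/x(c)$ destroys the divisibility-by-$\tilde n$ that you need; your determinant argument works directly without that reduction and the case split A1--A4/B is only needed to get the orientation and the formula for~$h$ right.

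The paper's proof is considerably shorter because it exploits a piece of geometry you do not use: since $ac$ depends on~$\hat a c$, the half-lines $A(ac)$ and $A(\hat a c)$ are perpendicular, and~$\hat b$ lies on $A(\hat a c)$, so~$\hat b$ and~$c$ share one coordinate. After invoking Lemma~\ref{lem:2-bend-c-q-contains-grid-points} to reduce to the single case $p=q\in\overline{\hat a\,\hat b}$, the paper normalises so that $\hat a=(0,0)$, $\hat b$ is on the horizontal ray to the right of~$c$, and $A(ac)$ is the downward vertical ray at~$c$; then~$h=y(c)(x(\hat b)-x(c))/x(\hat b)$, and one simply plugs in the extremal values $y(c)\ge\tilde n$, $x(\hat b)-x(c)\ge 1$, $x(\hat b)\le \tilde n^2-\tilde n+1$ to get $h>1/\tilde n$. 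So the paper trades your general determinant-with-divisibility argument for a one-line worst-case computation; your approach would still work if the two half-lines were not perpendicular, which the paper's would not.
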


\begin{figure}[t]
	\centering
	\includegraphics[] {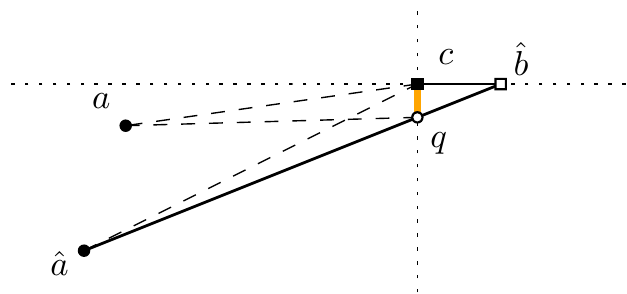}
	\caption{The bend point of the edge $ac$ is placed on a grid point of the interior of the line segment $\overline{c q}$. The point~$q$ depends on the placement of $\hat{b}$, which is the bend point of the edge $\hat{a}c$.}
	\label{fig:proof_dependent_bend_point}
\end{figure}

\begin{proof}
	All of the following coordinates are relative to the grid of size $\tilde{n}^2 \times \tilde{n}^2$ that has been refined once.
	We assume that~$\hat{a}c$ has been bent on $\hat{b}$ in the previous step.
	Given Lemma~\ref{lem:2-bend-c-q-contains-grid-points},
	we have to consider only the case that $p = q \in \overline{\hat{a} \hat{b}}$.
	
	We assume, without loss of generality, that $x(\hat{a}) = 0$ and $y(\hat{a}) = 0$.
	Furthermore, we assume that $x(c) \geq 0$ and $y(c) \geq 0$ because mirroring across some axis-parallel line does not change the structure of the drawing.
	We assume, without loss of generality, that $\hat{b}$ lies on the half-line originating at $c$ and going to positive infinity in the $x$-dimension because $\overline{\hat{a} \hat{b}}$ crosses some other axis-parallel half-line (here: the one going to negative infinity in $y$-dimension) and, again, mirroring does not change the structure of the drawing.
	This implies $y(c) > y(\hat{a})$.
	Our current situation is depicted in Fig.~\ref{fig:proof_dependent_bend_point}.
	Now, we analyze how short the line segment $\overline{q c}$ can become in the worst case.		
	The line segment will become shorter if
	\begin{itemize}
		\item the $x$-distance $x(\hat{b}) - x(c)$ decreases or
		\item the $y$-distance $y(c) - y(\hat{a})$ decreases or
		\item the $x$-distance $x(c) - x(\hat{a})$ increases.
	\end{itemize}
	So $\overline{q c}$ will be shortest if we assume the
	most extremes of these values, namely
	\begin{itemize}
		\item $x$-distance $x(\hat{b}) - x(c) = 1$, and
		\item $y$-distance $y(c) - y(\hat{a}) = \tilde{n}$ (it cannot become smaller because both are points of the coarser $\tilde{n} \times \tilde{n}$ grid and $y(c) > y(\hat{a})$), and
		\item $x$-distance $x(c) - x(\hat{a}) = (\tilde{n} - 1) \cdot \tilde{n}$
		(This is because both are grid points on the coarser $\tilde{n} \times \tilde{n}$ grid.
		Since $\hat{b}$ is on the right side of both, they cannot both be outermost grid points and, thus, they can only have a distance of $\tilde{n} - 1$ on the initial coarser grid and $(\tilde{n} - 1) \cdot \tilde{n}$ on the $\tilde{n}^2 \times \tilde{n}^2$ grid.)
	\end{itemize}
	Hence, for the slope $m$ of $\overline{\hat{a}
		\hat{b}}$, we get
	\begin{equation*}
	m = \frac{\tilde{n}}{\tilde{n}^2 - \tilde{n} + 1}.
	\end{equation*}
	Using this, we determine $y(q)$ by
	\begin{equation*}
	y(q) = \left( \tilde{n}^2 - \tilde{n} \right)  \cdot m = \left( \tilde{n}^2 - \tilde{n} \right)  \cdot \frac{\tilde{n}}{\tilde{n}^2 - \tilde{n} + 1} = \frac{\tilde{n}^3 - \tilde{n}^2}{\tilde{n}^2 - \tilde{n} + 1}.
	\end{equation*}
	Now, we can compute the length of the line segment $\overline{q c}$ this way:
	\begin{align*}
	y(c) - y(q) &= \tilde{n} - \frac{\tilde{n}^3 - \tilde{n}^2}{\tilde{n}^2 - \tilde{n} + 1} = \frac{\tilde{n}^3 - \tilde{n}^2 + \tilde{n} - \tilde{n}^3 + \tilde{n}^2}{\tilde{n}^2 - \tilde{n} + 1} \\
	&= \frac{\tilde{n}}{\tilde{n}^2 - \tilde{n} + 1} = \frac{1}{\tilde{n} - 1 + \frac{1}{\tilde{n}}}
	> \frac{1}{\tilde{n}} \hspace{6 pt} (\text{for } \tilde{n} > 1)
	\end{align*}
	With the same argument as in the proof of
	Lemma~\ref{lem:2-bend-c-q-contains-grid-points}, we
	see that the interior of $\overline{q c}$ contains
	always at least one grid point of the refined
	$\tilde{n}^3 \times \tilde{n}^3$ grid.
\end{proof}

For each remaining edge incident to a crossing vertex~$c$, we pick any grid point of its line segment~$\overline{q c}$ and place a bend point~$b$ on it.
Again, we replace $\overline{a c}$ by the two line segments $\overline{a b}$ and $\overline{b c}$.

\paragraph{Result}
Finally, we remove the dummy edges and dummy vertices that bound the subdivided kites and interpret the crossing vertices as crossing points.
We return the resulting RAC$_2$ drawing~$\Gamma$.
Now we analyze the size of the grid that ``carries''~$\Gamma$.
\begin{lemma}
	Every vertex, bend point and crossing point of
	$\Gamma$ lies on a grid of size at most $(8 n'^3 - 48
	n'^2 + 96 n' - 64) \times (4 n'^3 - 24 n'^2 + 48 n' -
	32)$, where $n'=n+5\crn(\mathcal{E})$ and
	$\crn(\mathcal{E})$ is the number of crossings
	in~$\mathcal{E}$.
\end{lemma}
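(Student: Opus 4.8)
The plan is to follow the output drawing through the two grid refinements that the algorithm performs and to track the grid width and the grid height \emph{separately}, instead of rounding both up to a common square size (the looser bookkeeping would only give $8(n'-2)^3$ in both dimensions). First I would recall the starting point. The biconnected planarized graph $G'$ has $n' = n + 5\,\crn(\mathcal{E})$ vertices, since the preprocessing replaces each crossing by one crossing vertex and then encloses that vertex by a subdivided kite whose four boundary edges are subdivided by four new dummy vertices (see Fig.~\ref{fig:extendedEmptyKite2}), while the biconnectivity augmentation inserts only edges. By the bound for the algorithm of Harel and Sardas, the straight-line drawing $\Gamma'$ of $G'$ lies on a grid of size $(2n'-4)\times(n'-2)$.

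Next I would fix $\tilde n := 2n'-4$ and observe that this equals the width of the starting grid and is at least as large as its height, $n'-2 \le 2n'-4$. Hence $\tilde n$ is a valid upper bound for the absolute difference of the $x$- \emph{and} of the $y$-coordinates between a crossing vertex~$c$ and any other vertex of the drawing; this is the only property of the quantity called ``$\tilde n$'' that is used in the proofs of Lemmas~\ref{lem:2-bend-c-q-contains-grid-points} and~\ref{lem:2-bend-c-q-contains-grid-points-also-for-hierarchie-depth-2}, so both lemmas remain valid with this choice although the grid is not square. The algorithm then refines the grid twice by the factor $\tilde n$: once before the bend points of the independent edges are placed, and once more before the bend points of the dependent edges are placed. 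Each refinement multiplies both grid dimensions by $\tilde n$ and preserves every old grid point, so after the two refinements the grid has size $(2n'-4)\,\tilde n^2 \times (n'-2)\,\tilde n^2$, that is, $(2n'-4)^3 \times (n'-2)(2n'-4)^2$.

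I would then check that this refined grid really carries all of $\Gamma$. The crossing vertices are vertices of $G'$ and therefore already sit on grid points in $\Gamma'$, hence on the refined grid; re-interpreting them as crossing points at the end does not move them, and by construction the two resulting $2$-bend edges pass through $c$ along two axis-parallel half-lines, so the crossing is a right angle. By Lemmas~\ref{lem:2-bend-c-q-contains-grid-points} and~\ref{lem:2-bend-c-q-contains-grid-points-also-for-hierarchie-depth-2}, every bend point is placed on a grid point in the interior of the corresponding segment $\overline{q c}$, which lies inside the available polygon and hence inside the subdivided kite, so inside the bounding box of $\Gamma'$; thus adding bend points does not enlarge the bounding box, and deleting the dummy edges and dummy vertices of the kites afterwards changes no coordinates. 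It then only remains to expand $(2n'-4)^3 = 8(n'-2)^3 = 8n'^3 - 48n'^2 + 96n' - 64$ and $(n'-2)(2n'-4)^2 = 4(n'-2)^3 = 4n'^3 - 24n'^2 + 48n' - 32$, which gives the claimed bound.

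The step I expect to require the most care is the second one: keeping the asymmetry of the initial $2(n'-2)\times(n'-2)$ grid intact through the two refinements so that the height ends up as $4(n'-2)^3$ rather than $8(n'-2)^3$, and confirming that using the single value $\tilde n = 2n'-4$ as the refinement factor is still safe for the arguments in Lemmas~\ref{lem:2-bend-c-q-contains-grid-points} and~\ref{lem:2-bend-c-q-contains-grid-points-also-for-hierarchie-depth-2} that involve a vertical (more generally, a perpendicular) span, which can be as small as $n'-2$ but is never larger than $\tilde n$. Everything else is routine bookkeeping, since those two lemmas already guarantee that the grid points needed for the bend points exist.
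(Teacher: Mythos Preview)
Your proposal is correct and follows essentially the same approach as the paper: establish $n'=n+5\crn(\mathcal{E})$, start from the $(2n'-4)\times(n'-2)$ grid of Harel--Sardas, refine twice by the factor $2n'-4$, and expand. The paper's own proof is in fact terser than yours; your extra care in justifying that the single refinement factor $\tilde n=2n'-4$ suffices for both coordinate directions (even though Lemmas~\ref{lem:2-bend-c-q-contains-grid-points} and~\ref{lem:2-bend-c-q-contains-grid-points-also-for-hierarchie-depth-2} are phrased for a square $\tilde n\times\tilde n$ grid) and that bend points stay inside the bounding box fills in details the paper leaves implicit.
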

\begin{proof}
	In the preprocessing, we build subdivided kites around each crossing.
	To this end, we insert four 2-paths per crossing, which means that we insert four new vertices around each crossing.
	Moreover we make every crossing point a vertex.
	Thus, the resulting plane graph $(G', \mathcal{E}')$ has $n' = n + 5 \crn(\mathcal{E})$ vertices.
	
	The shift algorithm places every vertex of the plane graph $(G', \mathcal{E}')$ onto a grid point of a grid of size $(2 n' - 4) \times (n' - 2)$.
	We refine this twice by $\hat{n} = (2 n' - 4)$ and obtain a grid of size:
	\begin{align*}
	\text{total grid size} &=(2 n' - 4)^3 \times (n' - 2)(2 n' - 4)^2 \nonumber \\
	&= (8 n'^3 - 48 n'^2 + 96 n' - 64) \times (4 n'^3 - 24 n'^2 + 48 n' - 32)
	\end{align*}
\end{proof}
Note that $\crn(\mathcal{E}) \leq n-2$ for 1-plane graphs~\cite{Czap2013}.
If we ignore the bend points, the drawing is on a grid of size $(2 n'
- 4) \times (n'-2)$, i.e., its size is quadratic.
Again, the algorithm by Harel and Sardas~\cite{Harel1998} and our modification run in linear time.
Therefore, we conclude the correctness of Theorem~\ref{thm:1-planar2-bendRACQuadraticArea}.

\section{Preserving Embeddings}
In this section, we show how to preserve the embedding when we compute 1-planar RAC$_1$ and IC-planar RAC$_0$ drawings from 1-plane and straight-line drawable IC-plane graphs, respectively.
There are algorithms known that compute such drawings from 1-plane and IC-plane graphs, but they may change the input embedding.
We describe how to modify these algorithms so that the input embedding is preserved in the output.
This means that the two containment relations shown in the diagram in Fig.~\ref{fig:DiagramBendRACClassRelation} with canceled ``$\mathcal{E}?$'' also hold for fixed embeddings.

\subsection{1-Planar 1-Bend RAC Drawings}
\label{sec:1-planar1-bendRACSameEmbedding}
Bekos et al.~\cite{Bekos2017} describe an algorithm for computing 1-planar RAC$_1$ drawings of 1-planar graphs in linear time.
Their algorithm takes a 1-plane graph as input, but the embedding may be changed during the execution of the algorithm, i.e., 
while the output is indeed a drawing of the same graph, it can induce a different 1-planar embedding.
In fact, they explicitly ask if every 1-planar embedding admits a RAC$_1$ drawing.
We answer their question in the affirmative by describing how
to modify their algorithm; see Theorem~\ref{thm:1-planar1-bendRACSameEmbeddingCopy}. 

\begin{theorem}
	\label{thm:1-planar1-bendRACSameEmbeddingCopy}
	\theoremOnePlaneOneBendRAC
\end{theorem}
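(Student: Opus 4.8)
The plan is to adapt the linear-time algorithm of Bekos et al.~\cite{Bekos2017} for 1-planar RAC$_1$ drawings. The only ingredient of their algorithm that may alter the embedding is a preliminary augmentation in which the input 1-plane graph is completed so that every crossing becomes a \emph{kite}; for a crossing that cannot be ``kited'' inside the given embedding, this step moves the crossing or inserts an edge into a different face. I would replace this augmentation by the dummy-structure technique already used in Section~\ref{sec:1-planar2-bendRACQuadraticArea}: given $(G,\mathcal{E})$, planarize every crossing into a crossing vertex, enclose each crossing vertex by a \emph{subdivided kite} consisting solely of dummy edges and dummy vertices, and finally make the graph biconnected by inserting dummy edges only into newly created faces (never into a face incident to a crossing vertex). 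As observed in Section~\ref{sec:1-planar2-bendRACQuadraticArea}, subdivided kites create no parallel edges and do not change the embedding of $G$, so we obtain a plane biconnected graph $(G',\mathcal{E}')$ that contains $G$ with precisely the embedding $\mathcal{E}$ once the dummies are deleted and the crossing vertices are reinterpreted as crossing points.

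Next I would draw $(G',\mathcal{E}')$ straight-line. Since the statement does not bound the grid size, any linear-time planar drawing algorithm is admissible; in particular one may first augment $(G',\mathcal{E}')$ to a $3$-connected plane graph using dummy edges placed away from the crossing vertices, and then apply a linear-time convex drawing algorithm, so that every face enclosed by a subdivided kite is drawn as a convex polygon. For each crossing vertex~$c$ we delete $c$ together with its four spokes; this opens an octagonal region $P_c$ whose boundary carries, in this cyclic order, the four endpoints $a_1,a_2,a_3,a_4$ of the two edges of $G$ that used to cross at $c$, and $P_c$ is star-shaped with the former position of $c$ in its kernel. Inside $P_c$ we reinsert the two edges $a_1a_3$ and $a_2a_4$, each with a single bend, so that they cross at a right angle: using that $a_1,\dots,a_4$ lie in convex position, we pick the crossing point $p$ on the arc of the Thales circle over a suitable pair of the endpoints that lies inside $P_c$ near the center, place the two bend points just beyond $p$ along the rays towards the remaining two endpoints, and connect these bend points to the corresponding $a_i$. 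This is exactly the one-bend right-angle gadget of Bekos et al.~\cite{Bekos2017}; one verifies that the two reinserted edges cross exactly once (their endpoints separate each other on~$\partial P_c$), at a right angle, and that all bend points and the crossing lie strictly inside $P_c$, so no surrounding edge is touched and no other face is changed. Finally I would remove all dummy edges and dummy vertices; this reverses the (embedding-preserving) preprocessing insertions, so the result is a 1-planar RAC$_1$ drawing of $G$ that respects $\mathcal{E}$.

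For the running time, planarizing the crossings, adding the subdivided kites, and making the graph biconnected and $3$-connected each take $\mathcal{O}(n)$ time and add only $\mathcal{O}(1)$ vertices per crossing, with $\crn(\mathcal{E})\le n-2$~\cite{Czap2013}; the convex drawing algorithm is linear; and the reinsertion around each crossing costs $\mathcal{O}(1)$. Hence the whole drawing is produced in $\mathcal{O}(n)$ time.

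The main obstacle is the reinsertion step. One has to guarantee that inside the region $P_c$ freed by deleting a crossing vertex --- a simple octagon whose exact shape is dictated by the planar drawing --- the two former crossing edges can always be redrawn with only \emph{one} bend each so that they cross perpendicularly, without introducing a crossing with the boundary of $P_c$ or an extra crossing between the two edges, and without changing any face. Convexity (equivalently, the four endpoints being in convex position and the relevant Thales arc entering $P_c$ near its center) is what makes this possible, but the geometric details --- choosing $p$ and the two bend points so that all four segments stay inside $P_c$, and checking that the convex drawing really delivers octagons ``fat enough'' near the center for this --- are the points that need to be worked out carefully, together with confirming that none of the auxiliary augmentation edges is ever placed in a face incident to a crossing vertex.
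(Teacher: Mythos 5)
Your overall strategy diverges from the paper's: instead of adapting Bekos et al.'s augmentation/contraction/convex-drawing pipeline, you planarize each crossing into a crossing vertex, wrap it in a \emph{subdivided} kite (the Section~\ref{sec:1-planar2-bendRACQuadraticArea} gadget), and then try to reinsert the two crossing edges with one bend each into the octagon $P_c$ left behind after deleting the crossing vertex and its spokes. The problem is that this substitution removes exactly the geometric guarantee that makes the one-bend reinsertion work. In Bekos et al.'s scheme the four endpoints $a_1,\dots,a_4$ are themselves the boundary of a face of the planarized graph, and the convex drawing algorithm of Chiba et al.~\cite{Chiba1984} renders that face as a \emph{strictly convex} quadrilateral; one diagonal can then be drawn straight and the other with a single bend meeting it at $90^\circ$. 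In your scheme the face boundary is the octagon $a_1,d_1,a_2,d_2,a_3,d_3,a_4,d_4$, and convexity of the four quadrilateral faces $(c,a_i,d_i,a_{i+1})$ only makes $P_c$ \emph{star-shaped} with respect to $c$. Your claim that ``$a_1,\dots,a_4$ lie in convex position'' does not follow: there are configurations with each face $(c,a_i,d_i,a_{i+1})$ convex and the $a_i$ in cyclic order around $c$ in which the quadrilateral $a_1a_2a_3a_4$ is non-convex (take, e.g., $c=(1,1)$, $a_1=(0,0)$, $a_2=(1,0.5)$, $a_3=(2,0)$, $a_4=(1,3)$, which is reflex at $a_2$). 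Then neither is a Thales arc over a pair of endpoints guaranteed to pass through $P_c$ ``near the center,'' nor are the segments from a bend point near that arc to the remaining endpoints guaranteed to stay inside the merely star-shaped octagon. So the crux of your construction is unsupported, and this is not a detail that routinely ``works out'': it is precisely the obstruction that forces \emph{two} bends per edge in Section~\ref{sec:1-planar2-bendRACQuadraticArea}, where the same subdivided-kite regions are handled by first routing each half-edge onto an axis-parallel segment at $c$.

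The paper's proof instead keeps Bekos et al.'s empty (non-subdivided) kites, so every crossing sits inside a strictly convex quadrilateral face, and attacks the embedding problem where it actually arises: at the parallel edges created by the kite augmentation. Concretely, it retains every original non-crossing edge, records the left/right sides and internal order of the subgraphs contracted into each thick edge at a separation pair so they can be re-expanded in the input order, and, when an original \emph{crossing} edge $uv$ had to be deleted because it acquired parallel dummy edges, reinserts it into the resulting divided quadrangle with one bend via a Thales-circle argument applied inside an actual triangular face $(a,b,u)$ of the drawing (Fig.~\ref{fig:reinstering-original-crossed-edge(d)}) --- which is where convexity is available for free. If you want to salvage your route, you would need either to prove a quantitative ``fatness'' property of the octagons produced by the convex drawing (which the cited algorithms do not provide), or to fall back to two bends, which proves a weaker statement than Theorem~\ref{thm:1-planar1-bendRACSameEmbeddingCopy}.
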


To establish this we describe the original algorithm and then our modifications thereof.

\paragraph{Original Algorithm}
The algorithm starts with an \emph{augmentation} step.
In the 1-plane input graph $(G, \mathcal{E})$, dummy edges are inserted around each pair of crossing edges to induce empty kites (empty kites are defined in Section~\ref{sec:NIC-planar1-bendRACQuadraticArea}).
During this process, parallel edges can occur, but no new crossings.
They remove the original edge from each set of parallel edges (this changes the embedding), and for each face of degree two, i.e., a face bounded by two parallel edges, they remove one of the edges.
There can still be parallel dummy edges.
At the end of the augmentation step they triangulate each face by inserting dummy edges and vertices to obtain a triangulated 1-plane multigraph $(G^{+}, \mathcal{E}^{+})$.

The next step is computing a \emph{hierarchical contraction} of $(G^{+}, \mathcal{E}^{+})$.
For each set of parallel edges there is an inner graph component separated from the rest of the graph by the two outermost edges of these parallel edges.
This inner component is contracted to a single \emph{thick edge}, to which the information about the contracted subgraph is saved.
This contraction operation is applied (recursively) to every set of parallel edges.
In this way, they obtain a hierarchy of simple 1-plane 3-connected triangulated graphs.
The top-level graph is denoted by $(G^{*}, \mathcal{E}^{*})$.

The last step of the algorithm is \emph{drawing} the graph.
They remove the crossing edges from $(G^{*}, \mathcal{E}^{*})$ and draw it with an algorithm that delivers strictly convex straight-line drawings where the outer face is a prescribed convex polygon.
The linear-time algorithm by Chiba et al.~\cite{Chiba1984} fulfills these requirements.
They pass, as the prescribed polygon, a trapezoid if the outer face has degree four\footnote{i.e., when a crossing on the outer face was removed at the beginning of the drawing step} and a triangle otherwise.
Next, they manually reinsert the crossing edges.
For the inner convex faces, they draw one edge straight-line and the other edge with a bend so that it crosses the first edge at a right angle.
For the outer faces, they bend both edges.
This procedure is applied recursively for each subgraph contracted to a thick edge.
Since they can prescribe the shape of the outer face, they can always pass a shape that fits into the free space next to a thick edge to expand each subgraph.
In the end they remove the dummy edges and vertices that have not been part of the input graph and obtain a 1-planar RAC$_1$ drawing of the input graph.
Note that the embedding may have changed during the execution of the augmentation step where they had parallel edges.

\paragraph{Our Modifications}
Our modification in the \emph{augmentation} is to keep the original edges that are not part of a crossing.
But like Bekos et al., for each original edge that crosses another edge, we remove it if it gained parallel edges during the augmentation step.
When we remove such a crossing edge $e$, an empty kite becomes a divided quadrangle (see Fig.~\ref{fig:removal-of-parallel-crossed-edge}; divided quadrangles are defined in Section~\ref{sec:NIC-planar1-bendRACQuadraticArea}).	
\begin{figure}[tb]
	\begin{subfigure}[t]{0.48 \linewidth}
		\centering
		\includegraphics[page=1]
		{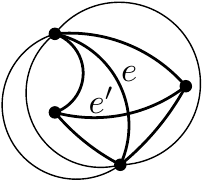}
		\caption{Empty kite with crossing edges $e$ and~$e'$. The original edge~$e$ gained parallel dummy edges in the execution of the augmentation step.}
	\end{subfigure}
	\hfill
	\begin{subfigure}[t]{0.48 \linewidth}
		\centering
		\includegraphics[page=2]
		{parallel_edges}
		\caption{After the removal of the original edge~$e$, the empty kite becomes a divided quadrangle containing $e'$ as inner edge.}
	\end{subfigure}
	
	\caption{Removal of an original edge that has parallel edges and is part of a crossing.}
	\label{fig:removal-of-parallel-crossed-edge}
\end{figure}
Suppose the edge~$e'$ crossed $e$ before $e$'s removal.
Note that the edge~$e'$ cannot have parallel dummy edges since these would cross either $e$ or a parallel dummy edge of~$e$, but, as stated earlier, no inserted dummy edge results in a new crossing and a crossing with $e$ would violate the 1-planarity.
We remember from where we removed these edges because we will reinsert them later.

We do not modify the \emph{hierarchical contraction} step, but we save the order of the subgraphs contracted at each separation pair to a thick edge and save the relative position of the original edge.

\begin{figure}[tb]
	\begin{subfigure}{0.48 \linewidth}
		\centering
		\includegraphics[page=1]
		{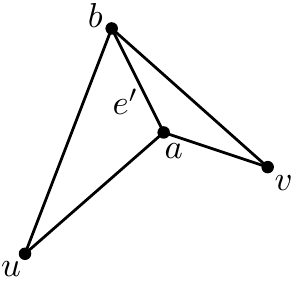}
		\caption{The divided quadrangle as returned by the shift algorithm.}
		\label{fig:reinstering-original-crossed-edge(a)}
	\end{subfigure}
	\hfill
	\begin{subfigure}{0.48 \linewidth}
		\centering
		\includegraphics[page=2]
		{reinserting_original_crossing}
		\caption{Thales' circles onto which we can place the later crossing point $c$.}
		\label{fig:reinstering-original-crossed-edge(b)}
	\end{subfigure}

            \bigskip

	\begin{subfigure}{0.48 \linewidth}
		\centering
		\includegraphics[page=3]
		{reinserting_original_crossing}
		\caption{Crossing point $c$ and the first parts of the crossing edges $e$ and $e'$ up to their bend points.}
		\label{fig:reinstering-original-crossed-edge(c)}
	\end{subfigure}
	\hfill
	\begin{subfigure}{0.48 \linewidth}
		\centering
		\includegraphics[page=4]
		{reinserting_original_crossing}
		\caption{Reinserted 1-bend edges $e$ and $e'$ that cross at a right angle inside the quadrangle $(a, v, b, u)$.}
		\label{fig:reinstering-original-crossed-edge(d)}
	\end{subfigure}
	
	\caption{Reinserting the original edge $e = uv$, which crosses the edge $e' = ab$ at a right angle.}
\end{figure}

The \emph{drawing} step is almost the same as in the original algorithm, but we make sure that we draw the inner subgraphs that were contracted at a separation pair~$\{u, v\}$ in the original order.
We distinguish two~cases.

\begin{description}
	\item[Case~1:] We have kept the original edge $uv$ from the set of parallel edges between $u$~and~$v$.
	
	In the original paper, they insert inner graphs stacked on one side (or they do not care on which side) of the straight-line segment~$\overline{u v}$.
	We insert the original edge $uv$ as straight-line segment and draw the subgraphs that have been to the left side of this edge in the original embedding on the left side of~$\overline{u v}$ in their original internal order.
	Analogously, we proceed with the subgraphs on the right side of~$\overline{u v}$.
	
	\item[Case~2:] We have removed the original edge $uv$ from the set of parallel edges between $u$~and~$v$.
	
	Again, we draw all subgraphs of the separation pair $\{u, v\}$ in their original internal order, but now we do not have the original edge $uv$ as the straight-line segment $\overline{u v}$.
	So we can draw them on one side or on both sides of~$\overline{u v}$.
	There will not be an edge at the straight-line segment~$\overline{u v}$.
	Instead, we reinsert the original edge with a bend at its original place in the embedding, i.e., into a divided quadrangle crossing an edge~$e'$, as follows.
	This divided quadrangle consists of two faces: one face has the two endpoints of~$e'$ (let these be $a$ and $b$) and $u$ as corner points, and the other face has $a, b, v$ as corner points (see Fig.~\ref{fig:reinstering-original-crossed-edge(a)}).
	To obtain a RAC$_1$ drawing, we reinsert $e$ in the following way.
	We remove the straight-line edge~$e'$ so that we obtain the empty quadrangle~$(a, v, b, u)$.
	We will choose a point $c$ on the Thales' circle around $\overline{a u}$ or $\overline{b u}$ that lies strictly inside the triangle $(a, b, u)$ (see Fig.~\ref{fig:reinstering-original-crossed-edge(b)}).
	To do this we first establish its existence.
	Assume for contradiction that it does not exist.
	Then, $b$ lies inside the Thales' circle around~$\overline{a u}$.
	Therefore, the angle~$\angle a b u$ is greater than 90 degrees.
	Analogously to $b$, $a$ must lie inside the Thales' circle around $\overline{b u}$.
	Therefore, the angle $\angle u a b$ is also greater than 90 degrees and the triangle~$(a, b, u)$ has a sum of internal angles that exceeds 180 degrees.
	This is a contradiction and, thus, there is a point on one of these two Thales' circles inside the triangle~$(a, b, u)$.
	Clearly, such a point~$c$ can be found in constant time.
	We will use $c$ as the crossing point.
	Without loss of generality, let $c$ lie on the Thales' circle around~$\overline{a u}$.
	We draw the first part of~$e$ and~$e'$ as straight-line segments~$\overline{u c}$ and~$\overline{a c}$, respectively.
	Now, we lengthen the segment of $e'$ over $c$ a little so that we are still inside~$(a, b, u)$; see Fig.~\ref{fig:reinstering-original-crossed-edge(c)}.
	From there we can reach $b$ with another straight-line segment because this vertex is a corner point of the triangle we are currently in and we have already passed~$e$.
	We also lengthen the straight-line segment of $e$ over $c$ until it reaches the other triangle, i.e., $(a, v, b)$, but does not pass or touch the border of the whole face of the empty quadrangle.
	From that point, we can reach~$v$ with another straight-line segment.
	These are our bend points for $e$ and $e'$ (see Fig.~\ref{fig:reinstering-original-crossed-edge(d)}).
\end{description}
After having removed the dummy edges and vertices, we obtain a drawing of the 1-plane input graph in its original embedding.
Observe that our modifications do not require more than linear time.
Like the original algorithm, the adapted version also only bends edges that participate in a crossing.
Edges that are not crossed are drawn as straight-line segments.

\subsection{IC-Planar Straight-Line RAC Drawings}
\label{sec:IC-planar0-bendRACSameEmbedding}
Brandenburg et al.~\cite{Brandenburg2016} describe an algorithm for computing IC-planar RAC$_0$ drawings of IC-planar graphs in cubic time.
Their algorithm takes an IC-plane graph as input, but this embedding may be changed during the execution of the algorithm, i.e., 
while the output is indeed a drawing of the same graph, it can induce a different IC-planar embedding.
We describe a slight modification to their algorithm to preserve the input embedding and obtain Theorem~\ref{thm:IC-planar0-bendRACSameEmbeddingCopy} below.

\begin{figure}[tb]
	\begin{subfigure}[t]{0.48 \linewidth}
		\centering
		\includegraphics[page=1]
		{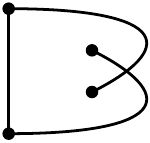}
		\caption{B-configuration}
		\label{fig:B-Config}
	\end{subfigure}
	\hfill
	\begin{subfigure}[t]{0.48 \linewidth}
		\centering
		\includegraphics[page=2]
		{b-w-configuration}
		\caption{W-configuration}
		\label{fig:W-Config}
	\end{subfigure}
	\caption{Embedded graph configurations of 1-plane graphs that cannot be drawn with straight-line edges.}
	\label{fig:B-andW-Config}
\end{figure}

Note that, as shown by Thomassen~\cite{Thomassen1988}, the straight-line drawable 1-plane graphs are precisely the 1-plane graphs without any so called \emph{B-} or \emph{W-configuration} (see Fig.~\ref{fig:B-andW-Config}).
Since IC-plane graphs cannot contain W-configurations, any IC-plane graph without B-configurations is straight-line drawable.
Clearly, our modifications only work for these straight-line drawable IC-plane graphs.

\begin{theorem}
	\label{thm:IC-planar0-bendRACSameEmbeddingCopy}
	\theoremICPlaneStraightLineRAC
\end{theorem}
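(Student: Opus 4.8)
The plan is to adapt the algorithm of Brandenburg et al.~\cite{Brandenburg2016} in the same spirit as the modification of the algorithm of Bekos et al.\ in Section~\ref{sec:1-planar1-bendRACSameEmbedding}. First I would recall the structure of the original algorithm: it augments the IC-plane input graph with dummy edges around each crossing to create empty kites, it may then delete an original crossed edge that has become part of a bundle of parallel edges (this is the step that can change the embedding), it triangulates the planar skeleton obtained by removing the crossing edges, draws that planar graph straight-line, and finally reinserts each crossing edge. Since the graph is IC-plane, the two pairs of crossing edges are vertex-disjoint, so every kite is ``isolated'' in the sense that its interior contains no other crossing; this is what makes the reinsertion local and is the feature I would exploit.

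The key modification is the same as before: in the augmentation step, keep every original edge that is not involved in a crossing, and for each original crossed edge $e$ that acquires parallel dummy edges, delete $e$ but \emph{remember its location in the embedding}, turning the empty kite into a divided quadrangle. Because the graph is IC-planar rather than merely 1-planar, the situation is actually simpler than in Section~\ref{sec:1-planar1-bendRACSameEmbedding}: the other edge $e'$ of the kite is vertex-disjoint from every other crossing, so no recursive hierarchy of contracted ``thick edges'' is needed around a separation pair through a crossing vertex. I would carry out the planar drawing step unchanged (producing a straight-line drawing of the triangulated planar skeleton), which gives us, for each former crossing, a drawn quadrilateral face $(a,v,b,u)$ where $ab$ was one crossing edge and $uv$ the other. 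Then I would reinsert the crossing edges straight-line inside this quadrilateral so that they cross at a right angle.

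The reinsertion is where the geometric work lies, and here I would reuse precisely the Thales'-circle argument already developed for Case~2 of Section~\ref{sec:1-planar1-bendRACSameEmbedding}, but now I must do it \emph{without any bends}, since the target is a RAC$_0$ drawing. Inside the quadrilateral $(a,v,b,u)$ I want a point $c$ that simultaneously lies on the segment $\overline{ab}$ \emph{and} on the segment $\overline{uv}$, with $\angle acu = 90^\circ$; equivalently, $c$ must be the point where $\overline{uv}$ meets $\overline{ab}$ and that intersection must be orthogonal. This cannot be arranged for an arbitrary drawn quadrilateral, so the real content is to show that the planar drawing step can be steered (as in Brandenburg et al., by choosing the outer shape and the internal coordinates appropriately, using their cubic-time convex-drawing subroutine) so that each such quadrilateral admits a straight-line right-angle crossing of its two diagonals. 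Concretely I would argue: by the same angle-sum contradiction as in Section~\ref{sec:1-planar1-bendRACSameEmbedding}, at least one of the triangles $(a,b,u)$, $(a,b,v)$ has the property that the foot of the perpendicular from the opposite vertex onto $\overline{ab}$ lies in its interior; by prescribing the convex polygon for the relevant face one can additionally force $u$ and $v$ to lie on opposite sides of line $ab$ and the two feet to coincide on $\overline{ab}$, yielding the crossing point $c$. The main obstacle is exactly this coordination between the global planar drawing routine and the purely local right-angle requirement at every crossing; once the routine is shown to deliver quadrilaterals of the required shape, reinserting the edges and deleting all dummy edges and vertices yields an embedding-preserving RAC$_0$ drawing, and since every step other than the planar drawing runs in linear time while the planar drawing runs in $\mathcal{O}(n^3)$ time~\cite{Brandenburg2016}, the overall running time is $\mathcal{O}(n^3)$, as claimed.
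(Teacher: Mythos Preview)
Your proposal has a genuine gap: you have misidentified the step in Brandenburg et al.\ that changes the embedding. The problematic operation is not primarily the deletion of a crossed edge that acquired parallel copies (as in the Bekos et al.\ modification of Section~\ref{sec:1-planar1-bendRACSameEmbedding}); it is the \emph{re-routing} of kite edges to make every kite \emph{empty}. Even in an IC-plane graph, a kite edge~$e$ may already be present while an arbitrary subgraph sits between~$e$ and the crossing; the original algorithm flips~$e$ to the other side of that subgraph, which alters the embedding. Your plan of ``keep every original non-crossed edge, remember the deleted crossed edge, and reinsert the diagonals into a quadrilateral'' never confronts this situation, so the quadrilateral face you want to work inside may not exist.

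The paper's fix is quite different from yours. It leaves Brandenburg et al.'s drawing machinery (shift algorithm with move/lift operations, not a convex-drawing routine) completely untouched and instead changes only the augmentation: for each non-empty kite it \emph{extracts} the subgraph trapped between a kite edge and the crossing into a separate graph with a fresh triangular outer face, obtaining a tree of plane-maximal IC-plane graphs, each with only empty kites. Every node of the tree is then drawn by the unmodified Brandenburg et al.\ procedure, and the drawings are recombined by rotating, scaling, and sliding one vertex along the shared edge; this last move is a shift in the sense of the shift algorithm and preserves all right angles because, by IC-planarity, the moved vertex is not incident to a crossing edge in the child. Your attempt instead tries to re-derive the right-angle geometry inside each quadrilateral from scratch (``steer the drawing so that the two diagonals meet orthogonally''), which is precisely the hard global coordination problem that Brandenburg et al.'s move/lift operations already solve and that the paper deliberately treats as a black box.
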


To establish this we describe the original algorithm and then our modifications thereof.

\paragraph{Original Algorithm}
The algorithm starts with an \emph{augmentation step} to
    obtain a plane-maximal IC-plane graph $(G^{+},
    \mathcal{E}^{+})$ from the IC-plane input graph $(G,
    \mathcal{E})$, where a \emph{plane-maximal} IC-plane graph is
    an IC-plane graph to which no edge can be added without
    creating a new crossing.
In this step, Brandenburg et al.\ do not only add edges, but
    also re-route edges, which may change the embedding.  They
    guarantee that $(G^{+}, \mathcal{E}^{+})$ has the following
    three properties:
\begin{enumerate}[(P1)]
    \item \label{item:crossings-induce-empty-kites} The four
      endpoints of each pair of crossing edges induce an empty
      kite (for the definition of empty kites, refer to
      Section~\ref{sec:NIC-planar1-bendRACQuadraticArea}).
    \item \label{item:ic-plane-triangulated} After removing one
      edge of each pair of crossing edges, the resulting graph is
      plane and triangulated.
    \item \label{item:outer-face-is-3-cycle} The outer face is a
      3-cycle of non-crossing edges.
\end{enumerate}
To satisfy
    property~(P\ref{item:crossings-induce-empty-kites}),
    Brandenburg et al.\ add and re-route edges such that every
    crossing induces an empty kite.  In this way they also lose
    B-configurations, which are not straight-line drawable.  They
    satisfy property~(P\ref{item:ic-plane-triangulated}) by
    triangulating the remaining faces.  For
    property~(P\ref{item:outer-face-is-3-cycle}), they argue that
    the graph has a triangular face without crossing edges, which
    can be made the outer face by re-embedding the graph.

In the next step, the \emph{drawing step}, Brandenburg et al.\
    draw the maximal IC-plane graph $(G^{+}, \mathcal{E}^{+})$.
    To this end, they modify the shift algorithm of de Fraysseix
    et al.~\cite{Fraysseix1990} as follows.  From each pair of
    crossing edges, they temporarily remove one edge.  This yields
    a planar graph.  Then, Brandenburg et al.\ compute a canonical
    ordering of this graph.  When their incremental drawing
    procedure has placed all four vertices of a pair of crossing
    edges, they perform additional \emph{move} and \emph{lift}
    operations to make one of the crossing edges horizontal and
    the other one vertical.  As a consequence, all crossings are
    at right angles.

These move and lift operations preserve most invariants of the
    shift algorithm such as all edges on the outer face having
    slope~$\pm 1$.  The invariant concerning the grid size,
    however, is not preserved, and the grid size can be
    exponential in the number of vertices.

    Finally Brandenburg et al.\ remove the dummy edges, which they
    added during the augmentation step.  This yields an
    IC-planar RAC$_0$ drawing of the input graph $G$.  Note that,
    due to the changes in the augmentation step, the drawing does
    not necessarily conform to the input embedding $\mathcal{E}$.

\paragraph{Our Modifications}
We suggest some modifications to preserve the input
    embedding~$\mathcal{E}$.  Since IC-plane graphs with
    B-configurations are not straight-line drawable, we assume
    that our IC-plane input graph~$(G, \mathcal{E})$ does not
    contain any B-configuration.

\begin{figure}[tb]
	\begin{subfigure}[t]{0.28 \linewidth}
		\centering
		\includegraphics[page=1] {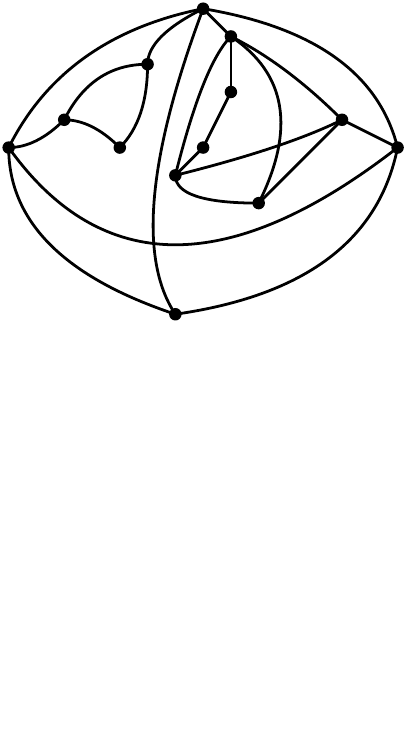}
		\caption{IC-plane graph where every crossing
                      is part of a (non-empty)
                      kite.}
		\label{fig:ic-plane_rac_hierarchical_construction(a)}
	\end{subfigure}
	\hfill
	\begin{subfigure}[t]{0.68 \linewidth}
		\centering
		\includegraphics[page=3] {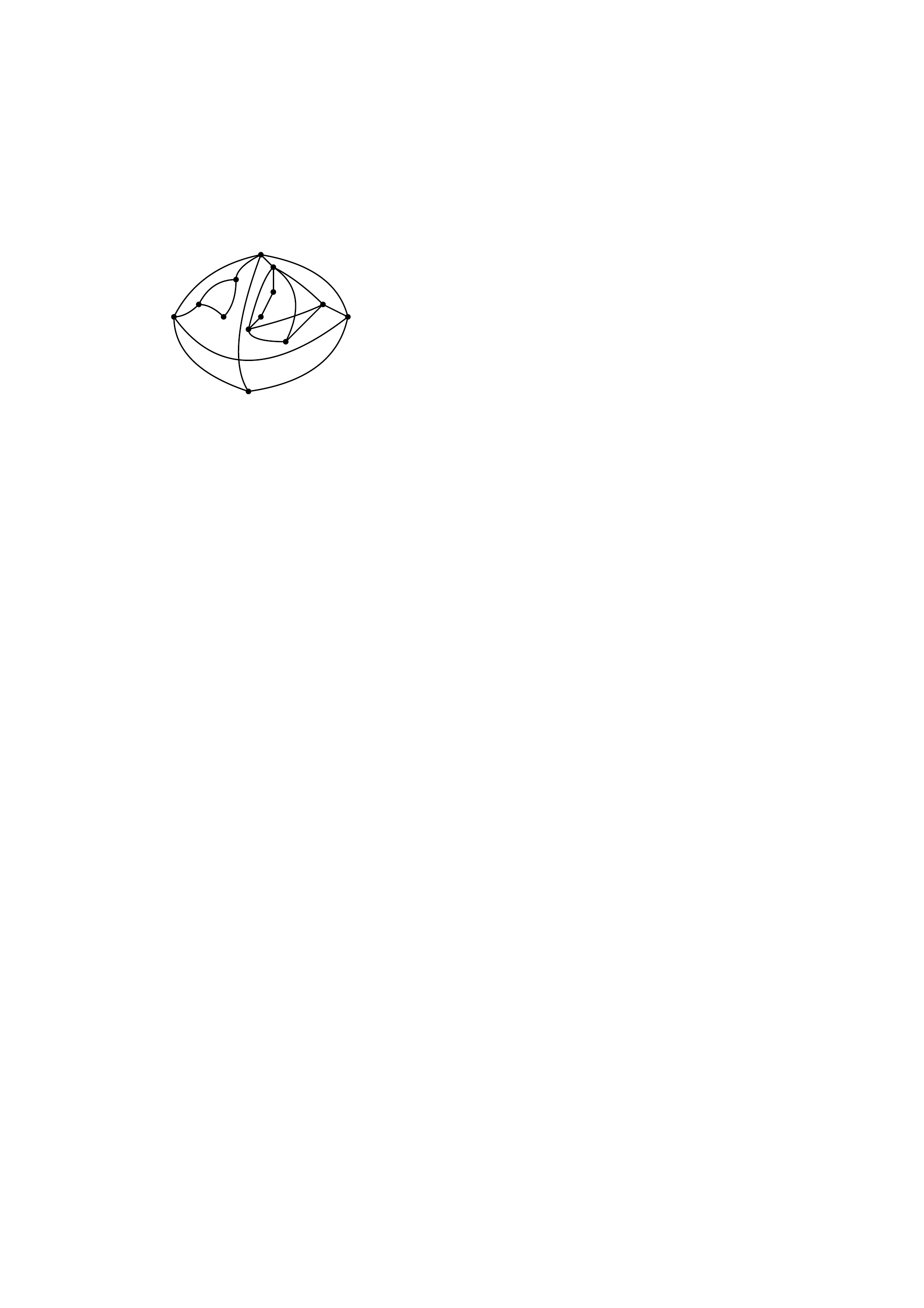}
		\caption{Tree built from the graph
                      in~(a).  Every subgraph inside a kite has
                      been extracted to obtain empty kites.  Dummy
                      edges and vertices whose task it is to turn
                      the outer face of a subgraph into a triangle
                      are colored blue.}
		\label{fig:ic-plane_rac_hierarchical_construction(b)}
	\end{subfigure}
	\caption{Our modification for obtaining empty kites
              around each crossing.}
	\label{fig:ic-plane_rac_hierarchical_construction}
\end{figure}

In the \emph{augmentation} step, we do not obtain a single
    plane-maximal IC-plane graph with the previously specified
    properties, but a tree of plane-maximal IC-plane
    graphs, all of which fulfill these properties. 
We start by inserting dummy edges such that there is a (not
    necessarily empty) kite at every crossing.
Note: if an edge is already present in the graph, we do not create any copies of it, i.e., no parallel edges are introduced.
Instead of re-routing outer edges of a kite to make it empty,
    we extract, for each outer kite edge~$e$, the subgraph~$G_e$
    between~$e$ and the crossing edges.  To~$G_e$, we add a
    copy~$e'=u'v'$ of~$e=uv$ and connect the copied endpoints~$u'$
    and~$v'$ to a new vertex~$w'$.  Hence, the outer face of~$G_e$
    is a triangle, the triangle $u'v'w'$; see
    Fig.~\ref{fig:ic-plane_rac_hierarchical_construction} for an
    example.  After extracting its at most four subgraphs, the
    kite is empty.	We link each face of the now empty kite to the
    subgraph that has been there before.
We proceed recursively to ensure empty kites also in the
    subgraphs and obtain a tree of IC-plane graphs; see
    Fig.~\ref{fig:ic-plane_rac_hierarchical_construction(b)}.
In every graph of the tree, we triangulate the
    remaining faces in the same way as the original algorithm.
We do not have to re-embed the graph to fulfill
    property~(P\ref{item:outer-face-is-3-cycle}) because (i)~all
    faces have been triangulated and (ii)~edges on the outer face
    don't cross.  The latter is due to the fact that every crossing
    has been enclosed by an empty kite.

\begin{figure}[t!]
	\begin{subfigure}[t]{0.52 \linewidth}
		\centering
		\includegraphics[page=1]{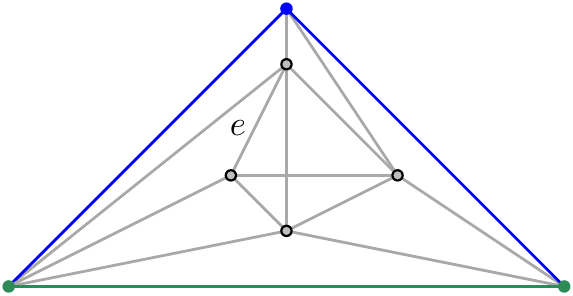}
		\caption{IC-planar RAC drawing of some level
                      with an empty kite enclosing a right-angle
                      crossing.
                    }
		\label{fig:ic-plane_rac_hierarchical_insertion(a)}
	\end{subfigure}
	\hfill
	\begin{subfigure}[t]{0.44 \linewidth}
		\centering
		\includegraphics[page=2]{ic-plane_rac_hierarchical_insertion}
		\caption{Fitting a scaled and rotated copy of
                      a drawing of~$G_e$ into a face of the empty
                      kite.}
		\label{fig:ic-plane_rac_hierarchical_insertion(b)}
	\end{subfigure}
	
            \bigskip

	\begin{subfigure}[t]{0.52 \linewidth}
		\centering
		\includegraphics[page=1]{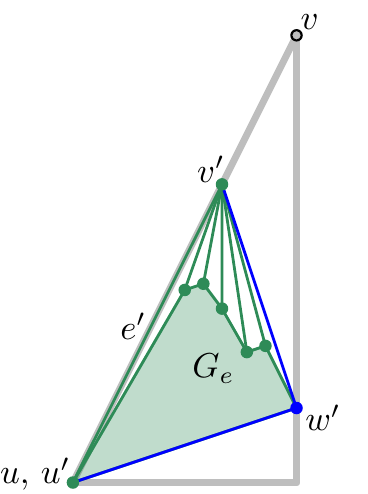}
		\caption{Detailed view of the face into which
                      we insert the IC-planar RAC drawing of the
                      subgraph~$G_e$.}
		\label{fig:ic-plane_rac_hierarchical_insertion(c)}
	\end{subfigure}
	\hfill
	\begin{subfigure}[t]{0.44 \linewidth}
		\centering
		\includegraphics[page=2]{ic-plane_rac_local_insertion}
		\caption{Result of moving~$v'$ to the
                      corresponding upper-level vertex~$v$.}
		\label{fig:ic-plane_rac_hierarchical_insertion(d)}
	\end{subfigure}
	
	\caption{Combining two drawings of subgraphs from
              neighboring levels of the tree.}
	\label{fig:ic-plane_rac_hierarchical_insertion}
\end{figure}

In the \emph{drawing} step, we produce an IC-planar RAC$_0$
    drawing of every graph in every node of the tree
    in the same way as Brandenburg et al.\ do for the graph as a whole.
Then, we combine the single drawings.
We start with the drawing of the top-level graph.
As illustrated by the example in
    Figs.~\ref{fig:ic-plane_rac_hierarchical_insertion(a)}--\ref{fig:ic-plane_rac_hierarchical_insertion(b)},
for every face bounded by an edge~$e$ and two crossing edges
    that is linked to a subgraph~$G_e$, we take the drawing
    of this subgraph, and rotate and scale it as described below.  
    Clearly, rotating and scaling do not change any internal
    angle.  We rotate and scale the drawing of the subgraph~$G_e$
    such that (i)~it fits into the drawing of the corresponding
    face of the upper-level graph, (ii)~the copy~$e'=u'v'$ of~$e=uv$
    in~$G_e$ lies on~$e$ and one of the endpoints of~$e'$ lies on
    the corresponding endpoint of~$e$, say, $u'$ lies on $u$, and
    (iii)~the dummy vertex~$w'$ (that we connected to~$u'$ and~$v'$
    when building the tree) lies on one of the two crossing
    edges of the higher level.
Note that this is not yet a correct drawing of the combined
    subgraphs: usually~$e'$ will be shorter than~$e$
    (see the example in
    Fig.~\ref{fig:ic-plane_rac_hierarchical_insertion}).
To make~$e'$ as long as~$e$, we simply move~$v'$ to~$v$.
    (Recall that we assumed that $u'$ already lies on~$u$.)
Note that this operation does not cause any new crossings because
    it is equivalent to a shift operation as
    performed by the shift algorithm.  Moreover, it cannot violate
    a right-angle crossing in the subgraph since the whole graph
    is IC-plane and, therefore, the shifted vertex is not incident
    to a crossing edge in the subgraph (only in the higher-level
    graph).  An example of this step is given in
    Figs.~\ref{fig:ic-plane_rac_hierarchical_insertion(c)}--\ref{fig:ic-plane_rac_hierarchical_insertion(d)}.

We proceed recursively with the subgraphs of each inserted subgraph.
After having removed all dummy vertices and edges (which,
    hence, don't cause any crossings), we obtain
    an IC-planar RAC$_0$ drawing of the input graph~$G$ in the
    IC-planar input embedding~$\mathcal{E}$.
Our modifications can be performed in linear time.
Therefore, the modified algorithm runs in cubic time,
    matching the running time of the original algorithm.

\section{Conclusion and Open Questions}

The two main results that we have presented in this paper concern
compact RAC drawings of 1-planar graphs with a low number of bends
per edge.  More precisely, we have shown that any $n$-vertex NIC-plane
graph admits a RAC$_1^\text{poly}$ drawing in
$\mathcal{O}(n) \times \mathcal{O}(n)$
area and that any $n$-vertex 1-plane graph admits a
RAC$_2^\text{poly}$ drawing in
$\mathcal{O}(n^3) \times \mathcal{O}(n^3)$
area.
	
We have also shown how to adjust two existing algorithms for drawing
certain 1-planar graphs such that their embeddings are preserved.  More
precisely, we have proved that any 1-plane graph admits a RAC$_1$
drawing.  This answers an open question explicitly asked by the
authors of the original algorithm~\cite{Bekos2017}.  We have also
proved that any straight-line drawable IC-plane graph admits a RAC$_0$
drawing, where the original algorithm did not necessarily preserve the
embedding~\cite{Brandenburg2016}.

The diagram in Fig.~\ref{fig:DiagramBendRACClassRelation} leaves some open questions.
Does any 1-planar graph admit a RAC$_1^\text{poly}$ drawing?
Can we draw any graph in RAC$_0$ with only right-angle crossings in polynomial area when we allow one or two bends per edge?
What is the relationship between RAC$_1$ and RAC$_2^\text{poly}$?
Can we compute RAC$_2^\text{poly}$ drawings of 1-plane graphs in~$o(n^6)$ area?
	
Figure~\ref{fig:full_drawing-after} indicates that there is
potential for postprocessing.  In particular, a compactification step
might reduce the underlying grid size considerably.  In terms of
angular resolution, our algorithm is limited by its use of the shift
algorithm.  For example, we see particularly small angles incident to the
vertices on the outer face.  Therefore, an open problem is to
find ways to improve the angular resolution without destroying any
right-angle crossings.  To this end, one may consider adding more
bends or using smooth curves for the edges.  It would also be
interesting to investigate a potential force-directed approach either
as an alternative to our algorithm or simply as a postprocessing.

%
%
\pdfbookmark[0]{References}{toc}
\section*{References}
\bibliographystyle{elsarticle-num}
\bibliography{mybibliography}

\newpage

\end{document}